\documentclass[11pt, a4paper, oneside,reqno]{amsart}

\addtolength{\voffset}{0cm} 
\addtolength{\textheight}{0cm} 
\addtolength{\hoffset}{-2cm}
\addtolength{\textwidth}{4cm}
\setlength{\parskip}{1mm}
\linespread{1.2}

\makeatletter
\def\@settitle{\begin{center}%
		\baselineskip14\p@\relax
		\normalfont\LARGE\bfseries
		\@title
	\end{center}%
}

\def\section{\@startsection{section}{1}%
	\z@{.7\linespacing\@plus\linespacing}{.5\linespacing}%
	{\normalfont\large\bfseries}}

\def\subsection{\@startsection{subsection}{2}%
	\z@{.5\linespacing\@plus.7\linespacing}{.5\linespacing}%
	{\normalfont\bfseries}}

\def\@setauthors{%
  \begingroup
  \def\thanks{\protect\thanks@warning}%
  \trivlist
  \centering\footnotesize \@topsep30\p@\relax
  \advance\@topsep by -\baselineskip
  \item\relax
  \author@andify\authors
  \def\\{\protect\linebreak}%
  \authors%
  \ifx\@empty\contribs
  \else
    ,\penalty-3 \space \@setcontribs
    \@closetoccontribs
  \fi
  \endtrivlist
  \endgroup
}

\makeatother

\usepackage[square,numbers]{natbib}
\usepackage{xcolor}
\usepackage{graphicx}
\usepackage{hyperref}
\hypersetup{hidelinks}
\usepackage{multirow} 
\usepackage{bbm}
\usepackage{tikz-cd}
\usepackage[ruled,linesnumbered, noend]{algorithm2e}
\usepackage[font=small,labelfont=rm]{subcaption}
\usepackage[font=small,margin=10pt]{caption}
\usepackage{amsmath,amssymb,amsfonts}
\usepackage{bm,mathtools}
\usepackage{amsthm}

\theoremstyle{plain}
\newtheorem{theorem}{Theorem}[section] 
\newtheorem{proposition}[theorem]{Proposition}
\newtheorem{example}{Example}
\newtheorem{lemma}[theorem]{Lemma}

\newtheorem{definition}{Definition}

\newtheorem{remark}{Remark}
\newtheorem{problem}{Problem}

\def\probMeas{{\mathcal{P}}}
\def\discMeas{{\mathcal{D}}}

\def\expect{{\mathbb{E}}}

\def\prob{\mathbb{P}}
\def\probHat{\hat{\mathbb{P}}}

\def\signature{{\Delta}}
\def\probQ{\mathbb{Q}}
\def\probD{\mathbb{D}}



\def\wasserstein{{\mathbb{W}}}




\newcommand{\arginf}[1]{\underset{#1}{\mathrm{arg inf}\ }}

\def\indicator{{\mathbbm{1}}}



\newcommand{\norm}[1]{\left\lVert#1\right\rVert}

\def\lipschitz{{\mathcal{L}}}

\def\boundProbDef{{\mathcal{W}_{\bm{\mathcal{R}}, \bm{\mathcal{C}}}}}

\def\realNum{{\mathbb{R}}}

\def\natNum{\mathbb{N}}

\def\sSimplex{\Pi}


\def\sB{{\mathcal{B}}}
\def\sC{{\mathcal{C}}}
\def\sD{{\mathcal{D}}}

\def\sL{{\mathcal{L}}}

\def\sN{{\mathcal{N}}}

\def\sP{{\mathcal{P}}}

\def\sR{{\mathcal{R}}}
\def\sS{{\mathcal{S}}}

\def\sW{{\mathcal{W}}}
\def\sX{{\mathcal{X}}}
\def\sY{{\mathcal{Y}}}

\def\bsR{{\bm{\mathcal{R}}}}
\def\bsC{{\bm{\mathcal{C}}}}
\def\bsS{{\bm{\mathcal{S}}}}




\def\mA{{{A}}}



\def\vpi{{{\pi}}}

\def\vb{{{b}}}
\def\vc{{{c}}}

\def\vx{{{x}}}


\newcommand{\evpi}[1]{{\vpi^{(#1)}}}









\newcommand{\SA}[1]{\textcolor{orange}{[SA: #1]}}

\newif\ifdoublecolumn
\doublecolumnfalse
\newcommand{\maybenewline}{\ifthenelse{\boolean{doublecolumn}}{\\}{}}
\newcommand{\maybeamp}{\ifthenelse{\boolean{doublecolumn}}{&}{}}
\newcommand{\maybeqquad}{\ifthenelse{\boolean{doublecolumn}}{\qquad}{}}
\newcommand{\maybenonumber}{\ifthenelse{\boolean{doublecolumn}}{\nonumber}{}}

\title[Efficient Uncertainty Propagation with Guarantees in Wasserstein Distance]{Efficient Uncertainty Propagation \\ with Guarantees in Wasserstein Distance}

\author{Eduardo Figueiredo$^{*,1}$, Steven Adams$^{*,1}$, Peyman Mohajerin Esfahani$^{1,2}$, Luca Laurenti$^{1}$
}
\thanks{$^*$ Equal contribution. The authors are with (1) Delft University of Technology, and (2) the University of Toronto. Corresponding author's email: e.figueiredo@tudelft.nl. This research is partially supported by the NWO (grant OCENW.M.22.056) and the ERC Starting Grant TRUST-949796.}

\begin{document}

\maketitle

\begin{abstract}
In this paper, we consider the problem of propagating an uncertain distribution by a possibly non-linear function and quantifying the resulting uncertainty. We measure the uncertainty using the Wasserstein distance, and for a given input set of distributions close in the Wasserstein distance, we compute a set of distributions centered at a discrete distribution that is guaranteed to contain the pushforward of any distribution in the input set. 
Our approach is based on approximating a nominal distribution from the input set to a discrete support distribution for which the exact computation of the pushforward distribution is tractable,  thus guaranteeing computational efficiency to our approach. Then, we rely on results from semi-discrete optimal transport and distributional robust optimization to show that for any $\epsilon > 0$ the error introduced by our approach can be made smaller than $\epsilon$. Critically, in the context of dynamical systems, we show how our results allow one to efficiently approximate the distribution of a stochastic dynamical system with a discrete support distribution for a possibly infinite horizon while bounding the resulting approximation error. 
We empirically investigate the effectiveness of our framework on various benchmarks, including a 10-D non-linear system, showing the effectiveness of our approach in quantifying uncertainty in linear and non-linear stochastic systems.  
\end{abstract}

\section{Introduction}
Modern cyber-physical systems are commonly affected by various sources of \emph{uncertainty}. These include both the uncertainty caused by the intrinsic randomness in the system dynamics \cite{stengel1994optimal} and the uncertainty due to the use of statistical learning algorithms to estimate the unknown components/parameters of the system \cite{ljung2010perspectives, williams1995gaussian}. Consequently, it is common that mathematical models are not only stochastic, but the distribution of the various random variables are themselves uncertain \cite{tsiamis2019finite}. As a result, when these models are used in safety-critical applications, the resulting uncertainty cannot be neglected \cite{cheng2021limits} and must be propagated through possibly non-linear functions. For instance, this is the case for stochastic dynamical systems, where the input distribution and the distribution of the noise affecting the system are commonly estimated from data and need to be propagated through the system dynamics for multiple (possibly infinite) time steps \cite{arnold1995random}. 
Unfortunately, how to propagate uncertain distributions through non-linear functions is still an open question. This leads to the main question in this paper: how can we efficiently propagate an uncertain distribution through a non-linear function with formal guarantees of correctness?

Propagating a distribution $\prob$ through a function $f$ is equivalent to computing the push-forward distribution of $\prob$ by $f$ denoted by $f\#\prob$, which in the context of stochastic dynamical systems is equivalent to computing the Chapman-Kolmogorov Equation \cite{pavliotis2014stochastic}. Unfortunately, in general, even when $\prob$ is known, computing $f\#\prob$ in closed form is not possible and requires approximations \cite{landgraf2023probabilistic}, such as moment matching \cite{deisenroth2011pilco} or discretization-based methods \cite{julier2004unscented}. Unfortunately, these techniques either come with no correctness guarantees or are too computationally demanding due to the need to discretize the full state space and do not support any uncertainty in $\prob$. When $\prob$ is uncertain, the problem is exacerbated by the additional challenge of dealing with a possibly infinite set of distributions that must all be propagated through $f$. 
While this problem is receiving increasing interest \cite{aolaritei2022distributional, ernst2022wasserstein, adams2024finite}, existing approaches are either limited to linear $f$ or lack formality and scalability.

In this paper, given an uncertain distribution $\prob$ and a non-linear function $f$, we present a framework to efficiently approximate $f\#\prob$ via discrete distributions with formal quantification of the resulting uncertainty. To quantify the uncertainty, we rely on the Wasserstein distance \cite{villani2009optimal}. This choice is motivated by the properties of the Wasserstein distance (i.e., it is a metric, it bounds the distance of the moments of the distributions, and convergence in the Wasserstein distance guarantees weak convergence) and its connection with optimal transport, which allows us to devise particularly efficient algorithms to solve our problem. Our approach is based on the fact that the Wasserstein distance between a continuous and a discrete distribution can be characterized as the solution of a semi-discrete optimal problem for which optimal solutions can be efficiently computed \cite{peyre2019computational}.
By using this connection and using techniques from distributional robust optimization and stochastic optimization \cite{ben2009robust, bertsimas2004price, mohajerin2018data, gao2023distributionally}, we show that given a discrete distribution approximating $\prob$, the Wasserstein distance between the pushforward of $\prob$  by $f$  and of its discrete approximation can be efficiently bounded, even when $\prob$ is uncertain and $f$ non-linear. The resulting bound can then be minimized by appropriately selecting the support of the approximating discrete distribution. This allows us to derive an efficient algorithmic framework that, given an uncertain distribution $\prob$ and a non-linear function $f$ and a given error threshold $\epsilon>0$, returns a discrete distribution whose push-forward through $f$ is guaranteed to be closer than $\epsilon$ to $f\#\prob$.

We then show how our framework can be applied to formally approximate the state distribution of stochastic dynamical systems over time. We show that in contrast to existing results \cite{aolaritei2022distributional,figueiredo2024uncertainty}, our approach can be successfully applied to linear and non-linear systems 
and for both finite and infinite time prediction horizons. In particular, under relatively mild assumptions on $f,$ we prove the convergence of the approximation error of our approach over time to a fixed point. To further illustrate the usefulness of our framework, we perform an empirical evaluation on various benchmarks. In particular, we consider various linear and non-linear systems, including standard control benchmarks such as the Mountain Car \cite{singh1996reinforcement} and Dubins Car \cite{balkcom2018dubins}, and a 10-D model of a neural network. 
The empirical analysis highlights how our framework can successfully approximate the push-forward distributions in both linear and non-linear cases and with relatively small discrete distributions, thus showcasing its potential to efficiently approximate complex distributions even in complex iterative prediction settings.
    
In summary, the main contributions of this work are listed below:
\begin{itemize}
    \item {\bf Uncertainty propagation}: upper-bounds on the uncertainty measured in terms of the Wasserstein distance of the pushforward of an uncertain probability distribution through a possibly non-linear function (Theorem~\ref{th:bound}), and a refined version under no ambiguity (Theorem~\ref{th:bound-zero-ball});
    \item {\bf Algorithm \& convergence rate}: an efficient algorithmic procedure (Algorithm~\ref{alg:compute-bound}) to approximate the pushforward of an uncertain probability distribution by a discrete distribution, with guaranteed convergence in \(\rho\)-Wasserstein distance (Theorem~\ref{prop:convergence-algorithm-error-approx});
    \item {\bf Approximation error dynamics}: an application of our framework to stochastic dynamical systems for both finite and infinite prediction horizons (Theorem~\ref{thm:how-to-propagate-stochastic-systems}).
\end{itemize}

The paper is organized as follows. We formulate the problem in Section~\ref{section:problem-statement}, present the formal uncertainty propagation error bounds in Section~\ref{section:wasserstein-bounds}, and introduce an algorithmic procedure to propagate the uncertain distributions and compute these bounds in Section~\ref{section:optimize-locs-convergence}.  
Finally, in Section \ref{section:experimental-results}, we conduct an extensive empirical validation on several benchmarks, including complex non-linear dynamical systems, such as the Mountain and Dubins Car, and a 10-D non-linear system.

\section{Related Works}
Our work is connected with the distributionally robust optimization literature. In distributionally robust optimization, one is usually interested in computing the worst expected value of a certain transformation of a random vector w.r.t. a family of distributions $\sP$, i.e., $\sup_{\prob \in \sP} \expect_{\xi\sim\prob}[f(\xi)]$ \cite{popescu2007robust, delage2010distributionally, calafiore2006distributionally, gao2023distributionally}. In particular, \cite{mohajerin2018data} provides techniques to characterize this worst-case expectation via convex optimization in the case where $\sP$ is defined as a Wasserstein ambiguity set. 
Similarly, \cite{gao2023distributionally} prove that the distributionally robust problem is equivalent to a dual minimization program in $\realNum$-space for a large class of functions $f$ and spaces $\sP$, a result from which we took inspiration to demonstrate some of our results in this paper. In our work, however, we aim to find the worst \emph{Wasserstein distance} between push-forwarded measures, given that they belong to a given set of probabilities close in Wasserstein distance. Furthermore, the Wasserstein distance is defined not as an expectation on the $\sP$-space but as the infimum expectation of a specific cost function on the coupling space. Thus, a different framework must be devised to solve the problem. 

Uncertainty propagation for various classes of functions has also been studied in the context of dynamical systems \cite{landgraf2023probabilistic}.
In \cite{aolaritei2022distributional}, the authors provide a framework for the propagation of a set of distributions close in the Wasserstein distance in dynamical systems, where a distribution needs to be propagated through the system dynamics multiple times. These results have been applied in the context of stochastic model predictive control \cite{aolaritei2023wasserstein, mcallister2024distributionally}. However, in terms of numerical tractability, these techniques are specific to linear systems. 
Instead, in \cite{ernst2022wasserstein}, the authors consider the uncertainty propagation problem in the context of random differential equations. The resulting bounds, however, involve different Wasserstein spaces, i.e., they propose a bound of type $\wasserstein_{\rho_1}(f\#\prob, f\#\probQ) \leq C(f, \prob, \probQ) \wasserstein_{\rho_2}(\prob, \probQ)$ where\footnote{The term $C(f, \prob, \probQ)$ is a constant upper-bounding the moment under both $\prob$ and $\probQ$ of a function only requiring local Lipschitz continuity from $f$, which is a less restrictive assumption compared to the piecewise Lipschitz continuity that we need in our work.} $\rho_1 < \rho_2$, thus not allowing for its use in settings where the uncertainty must be propagated multiple times and the information on the moments must be conserved\footnote{The $\rho$-Wasserstein distance between $\prob$ and $\probQ$ is related to how close their $\rho$-moments are (\cite{villani2009optimal, adams2024finite}). Propagating a bound in the $\rho_2$-Wasserstein space to the $\rho_1$-Wasserstein space implies a loss of information on the difference of the higher moments of the push-forwarded measures.}. Uncertainty propagation in stochastic dynamical systems has also been considered in 
\cite{figueiredo2024uncertainty}, where the authors consider mixture approximations of the distribution of a dynamical system over time with bounds in total variation. However, the resulting bounds cannot be applied in the context of our paper where we approximate a continuous distribution with a discrete one,  grow linearly with time independently of $f$, and, consequently, become uninformative for a not small prediction time horizon.  
A related work is also \cite{adams2024finite}, which views neural networks as stochastic dynamical systems and presents an algorithmic framework to approximate a stochastic neural network with a mixture of Gaussian distributions with error bounds in Wasserstein distance. This approach is, however, specific to neural networks.

Another related line of work is that of stochastic abstraction-based methods, where a stochastic system is abstracted into a variant of a discrete Markov chain \cite{abate2010approximate,kushner1990numerical} and that have also been recently extended to support distributional uncertainty on the system dynamics \cite{gracia2025efficient}.  However, these works suffer from the state-space explosion problem due to the need to finely discretize the full support of the distributions. In contrast, our approach approximates a continuous distribution with a discrete one by selecting the support of the discrete distribution to minimize the distance from the continuous one. This allows us to reduce the size of the support of the resulting discrete distribution by only placing locations in the regions with high probability mass.

\section{Preliminaries}
Here, we provide the necessary preliminaries on the Wasserstein distance and the quantization of probability distributions.

\subsection{Notation}
For a vector $x \in \realNum^d$,  we denote by $x^{(i)}$ its $i$-element. For a set $\sX \subseteq \realNum^d$, the indicator function for $\sX$  is denoted as $\indicator_{\sX}(x) \coloneqq 1 \text{ if } x \in \sX \text{; otherwise } 0$. For $\sX \subseteq \realNum^d$, we denote a partition of $\sX$ in $N$ \emph{regions} $\bsR \coloneqq \big\{ \sR_i \big\}_{i=1}^N$, i.e. $\sR_i \subseteq \sX$, $\bigcup_{i=1}^N \sR_i = \sX$, and  $\forall i \neq j, \sR_i \cap \sR_j = \emptyset$.
Given a Borel measurable space $\sX \subseteq \realNum^d$, we denote by $\mathcal{B}(\sX)$ the Borel sigma algebra over $\sX$ and by $\sP(\sX)$ the set of probability distributions on $\sX$.
For a random variable $x_{t}$ taking values in $\sX$, $\prob_{x_t} \in \sP(\sX)$ represents the probability measure associated to $x_t$. 
For $N \in \natNum$, $\sSimplex^N := \{ \pi \in \realNum^{N}_{\geq 0} \; : \; \sum_{i=1}^{N} \evpi{i} = 1 \}$ is the $N$-simplex. A discrete probability distribution $\probD \in \probMeas(\sX)$ is defined as $\probD=\sum_{i=1}^N\evpi{i}\delta_{\vc_i}$, where $\delta_\vc$ is the Dirac delta function centered at location $\vc \in \sX$ and  $\pi \in \sSimplex^N$ and $N$ is the number of locations in the support of $\probD$. The set of discrete probability distributions on $\sX$ with at most $N$ locations is denoted as $\discMeas_N(\sX)\subset \mathcal{P}(\sX)$. For a probability distribution ${\prob} \in \sP(\sX)$ and a measurable function $g: \sX \rightarrow \sY \subseteq \realNum^{q}$, we denote the push-forward measure of ${\prob}$ by $g$ as $g \# {\prob}$ such that for all $A \subset\mathcal{B}(\sY)$, $(g \# \prob)(A) \coloneqq \prob(g^{-1}(A))$. We note that $g \# \prob$ is still a probability distribution such that $g \# \prob \in \sP(\sY)$.

\subsection{Wasserstein (or Kantorovich) distance}
Let  $\rho \geq 1$, $\mathcal{X}\subseteq \mathbb{R}^d$, and define $\sP_{\rho}(\sX)$ as the set of probability distributions with finite $\rho$-th moments under the $L_\rho$-norm, i.e. all $\prob \in \sP(\sX)$ such that $\int_{\sX} \norm{x}^\rho d\prob(x) < \infty$.

\begin{definition}[$\rho$-Wasserstein distance]\label{def:wasserstein-distance}
For $\mathbb{P}, \mathbb{Q} \in \sP_{\rho}(\sX)$ the $\rho$-Wasserstein distance $\wasserstein_{\rho}$ between $\mathbb{P}$ and $\mathbb{Q}$ is defined as
\begin{equation} 
    \wasserstein_{\rho}(\prob, \probQ) \coloneqq \left( \inf_{\gamma \in \Gamma(\mathbb{P}, \mathbb{Q})} \int_{\sX \times \sX} \norm{x-y}^{\rho} d\gamma(x, y) \right)^{\frac{1}{\rho}}
\end{equation}
where $\Gamma(\mathbb{P}, \mathbb{Q}) \subset \sP_{\rho}(\sX \times \sX)$ represents the set of joint probability distributions with given marginals $\mathbb{P}$ and $\mathbb{Q}$ (also known as \emph{couplings} between $\mathbb{P}$ and $\mathbb{Q}$), i.e., for all $\gamma \in \Gamma(\mathbb{P}, \mathbb{Q})$, it holds that:
$$
\gamma(A \times \sX) = \mathbb{P}(A), \; \gamma(\sX \times A) = \mathbb{Q}(A) \qquad \forall A \in \sB(\sX).
$$
\end{definition}
Additionally, we define the $\rho$-Wasserstein ball of radius $\theta \geq 0$ centered at the probability distribution $\prob \in \sP_\rho(\sX)$, also called ambiguity set, as:
\begin{align}
    \mathbb{B}_\theta(\prob) \coloneqq \Big\{ \probQ \in \sP_\rho(\sX) : \wasserstein_\rho(\prob, \probQ) \leq \theta \Big\}.
\end{align}
That is, $ \mathbb{B}_\theta(\prob)$ contains all probability measures closer than $\theta$ to $\prob$ according to $\wasserstein_\rho$.

We finally state an identity that follows directly from the definition of the Wasserstein distance and will be extensively employed in the following sections:
\begin{align}\label{eq:wass-for-pushforward-equivalence}
    \maybeamp \wasserstein_\rho(f\#\prob, f\#\probQ) \maybenewline 
    \maybeamp\maybeqquad= \left( \inf_{\gamma \in \Gamma(\prob,\probQ)} \int_{\sX \times \sX} \norm{f(x)-f(y)}^{\rho} d\gamma(x, y) \right)^{\frac{1}{\rho}}. \maybenonumber
\end{align}
This identity states that the $\rho$-Wasserstein distance between the pushforward of two probability measures under a function $f$ is equal to a $\rho$-Wasserstein-like distance between the original measures, using a modified cost structure $\big(\norm{.} \circ (f \times f)\big)$.

\begin{figure}[ht]
    \centering
    \ifthenelse{\boolean{doublecolumn}}{
        \includegraphics[width=0.75\linewidth]{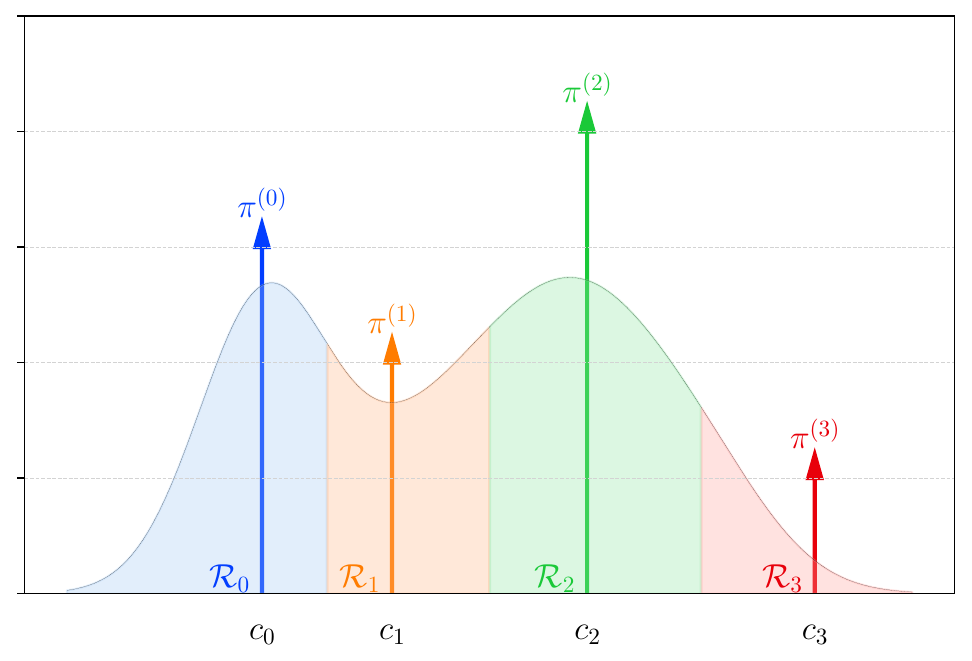}
        
    }{
        \includegraphics[width=0.5\linewidth]{Figures/Examples/signaturization.pdf}
    }
    \caption{Schematic representation of the density of a continuous probability distribution $\prob$, and its quantization $\Delta_{\bsR, \bsC}\#\prob$, which has support of size $N=|\bsC|=4$,  where we represent $\evpi{i} \coloneqq \prob(\sR_i)$, a notation that will be commonly adopted in the rest of the paper.}
    \label{fig:signaturization-representation}
\end{figure}
\subsection{Quantization of probability distributions}\label{subsection:algorithm-quantization-gaussian-case}
For $\sX \subseteq \realNum^d$, we consider a $\sX$-partition $\bsR = \big\{ \sR_i \big\}_{i=1}^N$ in $N$ \emph{regions}. Further, we denote by $\bsC = \big\{c_i\}_{i=1}^N$ a set of $N$ points in $\realNum^d$, which we refer to as \emph{locations} henceforward.

\begin{definition}[Quantization of a probability distribution]
    For partition $\bsR$ and set of locations $\bsC $, the quantization operator $\Delta_{\bsR, \bsC} : \sX \rightarrow \sX$ is defined by
    \begin{align}
        \Delta_{\bsR, \bsC}(x) \coloneqq \sum_{i=1}^N c_i \indicator_{\sR_i}(x).
    \end{align}
    That is, the quantization operator takes any point in the region $\sR_i$ and brings it to the location $c_i$. 
    For any probability distribution $\prob \in \sP(\sX)$, the \emph{quantization} (or \emph{discretization}) of $\prob$ is defined as the pushforward measure
    \begin{align}
        \Delta_{\bsR, \bsC} \# \prob = \sum_{i=1}^N \prob(\sR_i) \delta_{c_i} \in \discMeas_N(\sX).
    \end{align}
\end{definition}

Note that in the definition of $  \Delta_{\bsR, \bsC}$, we do not assume any relationship between the partition and the locations, although it is natural to pick $c_i \in \sR_i$. We should also stress that if, for a given set of locations $\bsC$, one defines the partition as the Voronoi partition w.r.t. $\bsC$, i.e., we take $\bsR$ with each region being constructed as
\begin{align}
    \sR_i \coloneqq \Big\{ z \in \realNum^d : \norm{z - c_i} \leq \norm{z - c_j}, \forall j \neq i \Big\},
\end{align}
where $\norm{.}$ is the underlying norm, then the quantization operator $\Delta_{\bsR, \bsC}$ is equivalent to the signature operation described in \cite{adams2024finite}. An example of the quantization operator is shown in Figure \ref{fig:signaturization-representation}. 

The concept of quantizing a continuous probability distribution is well known in the literature \cite{graf2000foundations, ambrogioni2018wasserstein, adams2024finite} and the following result to compute the $\rho$-Wasserstein distance between $\prob$ and $\Delta_{\bsR, \bsC} \# \prob$ is a straightforward extension of Theorem 1 of \cite{ambrogioni2018wasserstein}, which we will employ in this work.
\begin{proposition}[Quantization error]\label{prop:compute-theta-d}
    Let $\prob \in \sP_\rho (\sX)$ and assume a given $\sX$-partition $\bsR = \big\{ \sR_i \big\}_{i=1}^N$ and set of locations $\bsC = \big\{c_i\big\}_{i=1}^N$. Then, for any $\rho \geq 1$,
    \begin{equation}\label{eq:7}
        \wasserstein_\rho(\prob, \Delta_{\bsR, \bsC} \# \prob) \leq \bigg( \sum_{k=1}^N \int_{\sR_k} \norm{x-c_k}^\rho d\prob(x) \bigg)^\frac{1}{\rho}
    \end{equation}
    Furthermore, if $\bsR$ is chosen to be the Voronoi partition w.r.t. $\bsC$, then
    ~\eqref{eq:7} holds with equality. 
\end{proposition}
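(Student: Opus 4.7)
The plan is to prove the upper bound by exhibiting an explicit coupling between $\prob$ and $\Delta_{\bsR,\bsC}\#\prob$, and then show that this coupling is optimal in the Voronoi case.

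First, I would construct the deterministic coupling induced by the quantization map itself. Specifically, let $\gamma^\star \in \sP(\sX\times\sX)$ be the pushforward of $\prob$ by the map $x\mapsto (x,\Delta_{\bsR,\bsC}(x))$. It is immediate from the definition of pushforward that the first marginal of $\gamma^\star$ is $\prob$ and the second marginal is $\Delta_{\bsR,\bsC}\#\prob$, so $\gamma^\star \in \Gamma(\prob,\Delta_{\bsR,\bsC}\#\prob)$. Plugging this coupling into Definition~\ref{def:wasserstein-distance} and using that $\Delta_{\bsR,\bsC}(x)=c_k$ on $\sR_k$, the integral decomposes over the partition as
\begin{equation*}
    \int_{\sX\times\sX}\norm{x-y}^{\rho}\,d\gamma^\star(x,y) = \int_\sX \norm{x-\Delta_{\bsR,\bsC}(x)}^{\rho}\,d\prob(x) = \sum_{k=1}^N \int_{\sR_k}\norm{x-c_k}^{\rho}\,d\prob(x),
\end{equation*}
and taking the $\rho$-th root yields the desired upper bound on $\wasserstein_\rho(\prob,\Delta_{\bsR,\bsC}\#\prob)$.

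For the equality statement in the Voronoi case, I would establish the reverse inequality by showing that \emph{any} coupling incurs at least this cost. The crucial observation is that for any $\gamma \in \Gamma(\prob,\Delta_{\bsR,\bsC}\#\prob)$, its second marginal is supported on $\{c_1,\ldots,c_N\}$, so $\gamma$-almost everywhere one has $y\in\{c_1,\ldots,c_N\}$ and hence $\norm{x-y}^{\rho}\geq \min_{j}\norm{x-c_j}^{\rho}$. Integrating out $y$ against $\gamma$ (whose first marginal is $\prob$) gives
\begin{equation*}
    \int_{\sX\times\sX}\norm{x-y}^{\rho}\,d\gamma(x,y) \;\geq\; \int_\sX \min_{j}\norm{x-c_j}^{\rho}\,d\prob(x).
\end{equation*}
By the defining property of the Voronoi partition, $\min_j \norm{x-c_j}=\norm{x-c_k}$ whenever $x\in\sR_k$, so the right-hand side equals $\sum_{k=1}^N \int_{\sR_k}\norm{x-c_k}^{\rho}\,d\prob(x)$, matching the upper bound and forcing equality.

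I do not anticipate a serious obstacle: the main subtlety is just making sure the candidate coupling $\gamma^\star$ is measurable (the quantization map is a simple function, hence Borel measurable, so $x\mapsto(x,\Delta_{\bsR,\bsC}(x))$ is measurable and the pushforward is well defined) and handling a possible measure-zero ambiguity on the boundaries of the Voronoi cells, which does not affect any integral against $\prob$.
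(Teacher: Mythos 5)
Your proposal is correct and follows essentially the same route as the paper: the coupling you construct as the pushforward of $\prob$ under $x\mapsto(x,\Delta_{\bsR,\bsC}(x))$ is exactly the coupling $\gamma^*$ the paper uses in its auxiliary lemma, and the cost computation yielding the upper bound is identical. Your argument for equality in the Voronoi case (any admissible coupling has second marginal supported on the locations, so its cost is at least $\int_\sX \min_j\norm{x-c_j}^\rho\,d\prob(x)$) is the same pointwise-minimum observation the paper makes, just written out slightly more explicitly.
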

This result comes from the fact that the particular coupling which transports the probability mass of $\prob$ in the region $\sR_k$ to the location $c_k$ belongs to $\Gamma(\prob, \Delta_{\bsR, \bsC}\#\prob)$. Thus, its associated cost (right-hand side of ~\eqref{eq:7}) upper-bounds the Wasserstein distance between these two distributions.
\begin{remark}[Computing the quantization error]
\label{remark:theta-d-how-to-compute}
    To compute the constrained $\rho$-moments in \eqref{eq:7}, there are various approaches one can rely on. For instance, if $\prob$ is a product measure, $\norm{.}$ is the $L_{\rho}$-norm, and $\bsR$ is a set of axis-aligned hyper-rectangles, then we have:
    \begin{align*}
        \maybeamp \sum_{k=1}^N \int_{\sR_k} \norm{x-c_k}^\rho d\prob(x) \maybenewline \maybeamp = \sum_{k=1}^N \sum_{m=1}^d \int_{r_k^{(m)}} \left| x^{(m)} - c_k^{(m)} \right|^\rho d\prob_m\big( x^{(m)} \big)  \prod_{j \neq m} \prob_j \big(r_k^{(m)} \big),
    \end{align*}
    where $r_k^{(m)}\coloneqq [a_k^{(m)}, b_k^{(m)}]$, and $\sR_k = \prod_{m=1}^d r_k^{(m)}$. That is, we need to compute a set of constrained $\rho$-moment of the univariate distributions $\prob_m$, which is analytically tractable for many distributions (especially - although not limited to - for $\rho \in \{ 1, 2 \}$), including Gaussian (see Proposition 9 and Corollary 10 in \cite{adams2024finite} also for the general full covariance multivariate case), Uniform, Exponential, or Gamma distributions.
    Another particularly favorable case is when $\prob$ is discrete, i.e., $\prob \in \sD_N(\realNum^d)$; in this case, the bounds can be computed directly because of the finiteness of the support of the distributions.
\end{remark}

\section{Problem Formulation}\label{section:problem-statement}
After having formally defined $ \wasserstein_\rho$ and $\Delta_{\bsR, \bsC}$ we are now ready to state the main problem we consider in this paper. 
Given an uncertain distribution $\probQ\in \mathbb{B}_\theta(\prob)$, where $\mathbb{B}_\theta(\prob)$ is a Wasserstein ambiguity set of radius $\theta\geq 0$ centered at $\prob \in \sP_\rho(\mathcal{X})$ for $\mathcal{X}\subseteq \realNum^d$, and a possibly nonlinear measurable piecewise Lipschitz continuous function $f:\sX\to \sY$, our goal is to find discrete approximations of the pushforward distribution of
$\probQ$ by $f$. In particular, we consider the following problem.

\begin{problem}\label{prob:main}
    For an error threshold $\epsilon >0$, find a $\sX$-partition $\bsR = \big\{ \sR_i \big\}_{i=1}^N$ and locations $\bsC = \big\{ {c}_1, ..., {c}_N \big\}$ such that
    \begin{subequations}
    \begin{align}
    \label{eq:ProbConverg}
        \maybeamp\Bigg|  \sup_{\probQ \in \mathbb{B}_\theta(\prob)} \wasserstein_{\rho}(f \# \probQ, f \# \Delta_{\bsR, \bsC} \# \prob) \maybenewline\maybeamp - \sup_{\probQ \in \mathbb{B}_\theta(\prob)} \wasserstein_{\rho}(f \# \probQ, f  \# \prob) \Bigg| \leq \epsilon.  \maybenonumber 
    \end{align}
    Furthermore, find a bound \(\boundProbDef \geq 0\) such that
    \begin{align}
    \label{Eqn:probBounds}
    \sup_{\probQ \in \mathbb{B}_\theta(\prob)} \wasserstein_{\rho}(f \# \probQ, f \# \Delta_{\bsR, \bsC} \# \prob) \leq \boundProbDef.
    \end{align}
    \end{subequations}
\end{problem}

The goal of Problem \ref{prob:main} is to find arbitrarily accurate discrete approximations of the pushforward measure of an uncertain distribution and, critically, to bound the resulting uncertainty.  Note that the convergence requirement in \eqref{eq:ProbConverg} to $\sup_{\probQ \in \mathbb{B}_\theta(\prob)} \wasserstein_{\rho}(f \# \probQ, f  \# \prob)$ is natural as if $\prob$ and $\probQ$ differ, then, in general, the distance of their pushforward distributions will not be zero. Hence, even if $\epsilon$, the error introduced by the quantization, vanishes, then $\sup_{\probQ \in \mathbb{B}_\theta(\prob)} \wasserstein_{\rho}(f \# \probQ, f \# \Delta_{\bsR, \bsC} \# \prob)$ may not. The need to compute $\boundProbDef$ in Problem \ref{prob:main} guarantees that in this paper we are not only interested in computing converging discrete approximations, but also in obtaining non-trivial error bounds for the resulting uncertainty quantification problem.

Problem \ref{prob:main} aims at 
generalizing existing methods to perform uncertainty propagation of probability distributions, such as non-linear filtering \cite{bucy1971digital, alspach1972nonlinear} or sigma point methods \cite{julier2004unscented}, by computing formal error bounds on the error in terms of the Wasserstein distance and in selecting optimal discrete distribution approximations, which also accounts for the uncertainty in $\prob$. 
Note also that for $\theta=0,$ \eqref{Eqn:probBounds} reduces to bounding $\wasserstein_{\rho}(f \# \prob, f \# \Delta_{\bsR, \bsC} \# \prob)$, that is, the error in the pushforward approximation of a discrete operator applied to a known distribution. While this is itself an important open problem \cite{landgraf2023probabilistic}, as we will illustrate in Example \ref{example:initial}, we should already stress that in the case of uncertainty propagation in dynamical systems, which is the main application we consider in this paper, considering $\theta>0$ in \eqref{Eqn:probBounds} is essential.  

\begin{example}[Dynamical systems]
\label{example:initial}
 Consider the general model of a discrete-time stochastic process
\begin{equation*}\label{eq:examples}
    x_{t+1} = f(x_t,w_t), \quad x_0 \sim \prob_{x_0}, w_t \sim \prob_{w_t},
\end{equation*}
where $\prob_{x_0}$ and $\prob_{w_t}$ represent, respectively, the distribution of the initial condition and of the noise affecting the system at time $t$. 
If $f$ is non-linear or $\prob_{w_t}$ is not Gaussian, then the distribution of the system at time $t$, $\prob_{x_t}$, generally cannot be obtained in closed form and requires approximations \cite{girard2002gaussian, landgraf2023probabilistic,figueiredo2024uncertainty}. A solution to Problem \ref{prob:main} allows one to approximate arbitrarily well
$\prob_{x_t}$ for any $t>0$ with a discrete distribution $\hat{\prob}_{x_t}$ by iteratively approximating the pushforward distribution $f \# {\prob}_{x_t}$ and quantifying the approximation error. {Note that for $t>0$, the distribution of $\prob_{x_{t}}$ is uncertain because of the uncertainty introduced by the quantization at the previous time steps. Consequently, approximating $\prob_{x_{t+1}}$ requires one to consider $\theta>0$ in Problem \ref{prob:main} to propagate the resulting uncertainty through $f$}. In Section \ref{section:dynamical-systems}, we will show how a solution of Problem \ref{prob:main} allows us to efficiently compute approximations for $\prob_{x_{t}}$ with formal guarantees of correctness in the $\rho$-Wasserstein metric.
\end{example}

We should also stress that the impact of a solution to Problem \ref{prob:main} is not limited to dynamical systems and, for instance, also represents a key contribution to the distributional robust uncertainty propagation quantification problem, where it is still an open question how to quantify $\sup_{\probQ \in \mathbb{B}_\theta(\prob)} \wasserstein_{\rho}(f \# \probQ, f \# \prob)$ when $f$ is non-linear
\cite{aolaritei2022distributional}. A solution to Problem \ref{prob:main} would give an efficient method to over-approximate this quantity\footnote{Indeed, as a corollary of \eqref{eq:ProbConverg} and \eqref{Eqn:probBounds}, it holds that \(\sup_{\probQ \in \mathbb{B}_\theta(\prob)} \wasserstein_{\rho}(f \# \probQ, f  \# \prob) \leq \boundProbDef + \epsilon\).}.

\begin{remark}[Wasserstein distance vs. divergences]
    A key advantage in using the $\rho$-Wasserstein distance to quantify the error compared to other commonly used quantities, such as KL divergence \cite{gibbs2002choosing}, is that bounds in the  $\rho$-Wasserstein distance between two probability distributions can be used to bound their difference in moments (\cite{adams2024finite}, Lemma 2), in probability (\cite{gao2023distributionally}, Example 7), and many other further quantities of interest (\cite{ernst2022wasserstein}, Section 4). 
\end{remark}

\textbf{Approach.}
In Section \ref{section:wasserstein-bounds}, we start by focusing on bounding $\sup_{\probQ \in \mathbb{B}_\theta(\prob)} \wasserstein_{\rho}(f \# \probQ, f \# \Delta_{\bsR, \bsC} \# \prob)$ for a given ${\bsR, \bsC}$ using results from stochastic optimization and properties of the Wasserstein distance and derive bounds both for $\theta>0$ and $\theta=0$. Then, in Section \ref{section:optimize-locs-convergence}, we present an algorithm to efficiently select the partition $\bsR$ and locations $\bsC$, and we further prove the convergence of $\sup_{\probQ \in \mathbb{B}_\theta(\prob)} \wasserstein_{\rho}(f \# \probQ, f \# \Delta_{\bsR, \bsC} \# \prob)$ to  $\sup_{\probQ \in \mathbb{B}_\theta(\prob)} \wasserstein_{\rho}(f \# \probQ, f \# \prob)$ for the resulting algorithm as the number of locations $|\bsC|$ increases. Lastly, in Section \ref{section:dynamical-systems}, we show how our uncertainty propagation framework can be applied to approximate the state distributions in stochastic dynamical systems with formal $\rho$-Wasserstein guarantees for both finite and infinite prediction horizons. Section \ref{section:experimental-results} provides experimental results on various benchmarks to show the effectiveness of our approach. 

\section{Error Bounds in Wasserstein Distance}\label{section:wasserstein-bounds}
In this section, we show how for a given quantization operator $\Delta_{\bsR, \bsC}$ one can efficiently bound $\sup_{\probQ \in \mathbb{B}_\theta(\prob)} \wasserstein_{\rho}(f \# \probQ, f \# \Delta_{\bsR, \bsC} \# \prob)$ for any $\theta\geq 0$. Our main result is reported next and is based on a norm linearization around each location $c_k \in \bsC$. 
\begin{theorem}[Uncertainty propagation of ambiguity sets]
    \label{th:bound}
     For $\sX \subseteq \realNum^d$ and $\prob \in \sP_\rho (\sX)$, assume a given $\sX$-partition $\bsR = \big\{ \sR_k \big\}_{k=1}^N$ and a set of locations $\bsC = \big\{ {c}_k \big\}_{k=1}^N$. For every $k\in\{1,...,N\}$, call $\evpi{k} \coloneqq \prob(\sR_k)$, and let $\alpha_k, \beta_k \in \realNum_{+}$ be such that for $x\in \sX$
    \begin{equation}\label{eq:norm-linearization}
        \norm{f(x) - f(\vc_k)}^\rho \leq \alpha_k \norm{x - \vc_k}^\rho + \beta_k.
    \end{equation}  
    Further, denote
     \begin{equation}\label{def:theta_d-definition}
         \theta_d = \bigg( \sum_{k=1}^N \int_{\sR_k} \norm{x-c_k}^\rho d\prob(x) \bigg)^\frac{1}{\rho}
     \end{equation}
    Then, for $\alpha_{\max} = \max_{k \in \{1,...,N\}} \alpha_k$, it holds that
    \begin{align}\label{eq:main-result-bound}
        \maybeamp\sup_{\probQ \in \mathbb{B}_\theta(\prob)} \wasserstein_\rho(f\#\probQ, f\#\Delta_{\bsR, \bsC}\#\prob) \maybenewline\maybeamp\maybeqquad\leq \bigg( \alpha_{\max} (\theta+\theta_d)^\rho + \sum_{k=1}^{N} \evpi{k}\beta_k \bigg)^\frac{1}{\rho}.\maybenonumber
    \end{align}
\end{theorem}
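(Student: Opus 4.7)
The plan is to construct an explicit coupling between $\probQ$ and $\Delta_{\bsR, \bsC}\#\prob$ that routes mass through the nominal measure $\prob$, and then exploit the regional linearization \eqref{eq:norm-linearization} to bound the pushforward cost. Fix $\probQ \in \mathbb{B}_\theta(\prob)$ and, for any $\delta > 0$, choose $\gamma_1 \in \Gamma(\probQ, \prob)$ with $\left(\int \norm{x - y}^\rho d\gamma_1\right)^{1/\rho} \leq \theta + \delta$ (introducing the slack $\delta$ since the Wasserstein infimum need not be attained). Let $\gamma_2 \in \Gamma(\prob, \Delta_{\bsR, \bsC} \# \prob)$ be the deterministic coupling that sends every $y \in \sR_k$ to $c_k$, so $\int \norm{y - z}^\rho d\gamma_2 = \theta_d^\rho$ by \eqref{def:theta_d-definition}. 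The gluing lemma then produces a measure $\hat\gamma$ on $\sX^3$ with these two pairwise marginals, and its first--third projection is a coupling $\gamma_3 \in \Gamma(\probQ, \Delta_{\bsR, \bsC} \# \prob)$.

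Using the pushforward--Wasserstein identity \eqref{eq:wass-for-pushforward-equivalence} (only the easy direction is needed) gives
\[
\wasserstein_\rho(f \# \probQ, f \# \Delta_{\bsR, \bsC}\#\prob)^\rho \leq \int \norm{f(x) - f(z)}^\rho d\gamma_3(x, z) = \int \norm{f(x) - f(z)}^\rho d\hat\gamma(x, y, z).
\]
Because $z = c_k$ whenever $y \in \sR_k$ under $\hat\gamma$, applying \eqref{eq:norm-linearization} region by region yields
\[
\int \norm{f(x) - f(z)}^\rho d\hat\gamma \leq \sum_{k=1}^N \alpha_k \int_{\{y \in \sR_k\}} \norm{x - c_k}^\rho d\hat\gamma + \sum_{k=1}^N \beta_k \, \prob(\sR_k),
\]
and the second sum equals $\sum_k \evpi{k} \beta_k$. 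Bounding $\alpha_k \leq \alpha_{\max}$ in the first term and again using $z = c_k$ on $\{y \in \sR_k\}$ collapses it to $\alpha_{\max} \int \norm{x - z}^\rho d\hat\gamma$.

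To control this remaining integral, I would invoke Minkowski's inequality under $\hat\gamma$: writing $\norm{x - z} \leq \norm{x - y} + \norm{y - z}$ and exploiting the marginal structure gives
\[
\left( \int \norm{x - z}^\rho d\hat\gamma \right)^{1/\rho} \leq \left( \int \norm{x - y}^\rho d\gamma_1 \right)^{1/\rho} + \left( \int \norm{y - z}^\rho d\gamma_2 \right)^{1/\rho} \leq (\theta + \delta) + \theta_d.
\]
Combining all pieces yields $\wasserstein_\rho(f \# \probQ, f \# \Delta_{\bsR, \bsC}\#\prob)^\rho \leq \alpha_{\max}(\theta + \delta + \theta_d)^\rho + \sum_k \evpi{k} \beta_k$; since the right-hand side is independent of $\probQ$, the supremum over $\mathbb{B}_\theta(\prob)$ is bounded by the same quantity, and sending $\delta \downarrow 0$ delivers \eqref{eq:main-result-bound}.

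The main technical subtlety is the gluing construction (standard on Polish spaces) and the careful handling of the non-attainment of the Wasserstein infimum via the $\delta$-slack; everything else is bookkeeping once the regional linearization and Minkowski's inequality localize the transport through $\prob$. The uniformity provided by $\alpha_{\max}$ is what allows the three-marginal integral $\int \norm{x - z}^\rho d\hat\gamma$ to be globally bounded rather than region-by-region, which is essential for obtaining the clean closed form on the right-hand side of \eqref{eq:main-result-bound}.
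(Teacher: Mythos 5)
Your proof is correct, but it takes a genuinely different route from the paper's. The paper relaxes $\sup_{\probQ \in \mathbb{B}_\theta(\prob)} \wasserstein_\rho(f\#\probQ, f\#\Delta_{\bsR, \bsC}\#\prob)$ to a supremum over the larger set $S_{\theta+\theta_d}(\Delta_{\bsR, \bsC}\#\prob)$ of couplings whose second marginal is $\Delta_{\bsR, \bsC}\#\prob$ and whose transport cost is at most $(\theta+\theta_d)^\rho$ (discarding the constraint that the other marginal be $\theta$-close to $\prob$), and then invokes Lagrangian strong duality and Theorem~1 of \cite{gao2023distributionally} to reduce to a one-dimensional minimization over $\lambda \geq 0$, into which the linearization \eqref{eq:norm-linearization} is substituted; the optimal multiplier turns out to be $\lambda^* = \alpha_{\max}$. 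You instead exhibit a single explicit coupling $\gamma_3 \in \Gamma(\probQ, \Delta_{\bsR, \bsC}\#\prob)$ by gluing a near-optimal plan $\gamma_1 \in \Gamma(\probQ,\prob)$ with the deterministic quantization plan $\gamma_2$ of \eqref{eq:general-coupling}, apply \eqref{eq:norm-linearization} on the event $\{y \in \sR_k\}$ (correctly using that the linearization is global in $x$, which is exactly what the hypothesis provides), and control the residual term by Minkowski's inequality in $L^\rho(\hat\gamma)$. Both arguments are sound and land on the identical closed form $\alpha_{\max}(\theta+\theta_d)^\rho + \sum_k \evpi{k}\beta_k$; yours is more elementary (no DRO duality, no measurability machinery beyond the gluing lemma on standard Borel spaces) and its intermediate quantity --- the cost of one feasible coupling rather than a worst case over $S_{\theta+\theta_d}$ --- is in principle tighter, though that extra tightness is surrendered the moment you pass to $\alpha_{\max}$ and Minkowski, as you yourself observe. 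The paper's duality route has the advantage of exposing where the conservatism enters (see the discussion in Section~\ref{remark:ImprovedBoundTheta0}) and of connecting directly to the $\theta=0$ refinement of Theorem~\ref{th:bound-zero-ball}, which is essentially your coupling argument with $\gamma_1$ the identity coupling and the linearization required only locally on each $\sR_k$.
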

The proof of Theorem~\ref{th:bound} is given in Appendix~\ref{proof:th:bound}, where we rely on duality to relax the computation of $ \sup_{\probQ \in \mathbb{B}_\theta(\prob)} \wasserstein_\rho(f\#\probQ, f\#\Delta_{\bsR, \bsC}\#\prob)$ into a one-dimensional minimization problem that can be efficiently bounded by using the fact that $\wasserstein_\rho(\prob, \#\Delta_{\bsR, \bsC}\#\prob)$ can be formulated as a semi-discrete optimal transport problem (Proposition \ref{prop:compute-theta-d}), and on the local linearization of $f$ given in \eqref{eq:norm-linearization}. An algorithm to automatically select $\alpha_k$ and $\beta_k$ for the various locations will be given in Section \ref{subsection:approx-algo}, while how to compute $\theta_d$ has already been mentioned in Remark \ref{remark:theta-d-how-to-compute}.
Before discussing the theoretical implications of Theorem \ref{th:bound} in the rest of this section, we should stress that a potential source of conservatism in Theorem \ref{th:bound} is in the linearization around each location $c_k$, which must hold for all $x\in \sX$ and not just locally in $\sR_k$. This is due to the uncertainty of not knowing $\prob$ exactly. 
In Subsection \ref{sec:theta0}, we show that such a requirement can be relaxed, and consequently, the bound is improved when there is no ambiguity set, i.e., $\theta=0$.

\begin{remark}[Lipschitz-based uncertainty propagation]\label{rem:linearSyst}
Note that a straightforward corollary of Theorem \ref{th:bound} is that
\begin{align}
    \label{eq:LipschitzBound}
   \sup_{\probQ \in \mathbb{B}_\theta(\prob)} \wasserstein_\rho(f\#\probQ, f\#\Delta_{\bsR, \bsC}\#\prob) \leq \sL_f \big( \theta+\theta_d \big),
\end{align}
where $\sL_f$ is the Lipschitz constant of $f$ according to the $L_\rho$-norm\footnote{This follows by  observing that for all $j\in\{1,...,N\}$ one can select $\beta_j=0$  and $\alpha_j=\sL_f$. The resulting choice always satisfies \eqref{eq:norm-linearization} by the definition of Lipschitz constant.}.
However, in general, as we give intuition in Example \ref{exampl:ImportanceOdAlphaBeta} below, and we will show empirically in Section \ref{section:experimental-results}, the bound in \eqref{eq:main-result-bound} is generally substantially tighter and can return bounds that are orders of magnitude smaller. The intuition is that in the regions $\mathcal{R}_i$ where the local Lipschitz constant of $f$ is large, one can rely on a larger $\beta_j$ to obtain a lower $\alpha_{\max}$. If, in these regions, the probability mass of $\prob$ is small (and, consequently, $\evpi{j}$ is low), then the bound could substantially improve.  
Note that an exception is when $f$ is linear, where it is easy to show that the bounds in \eqref{eq:main-result-bound} and \eqref{eq:LipschitzBound} coincide.
In fact,  if  $f$ is  linear, i.e., $f(x) \coloneqq Ax + b$, Theorem \ref{th:bound} reduces to
   \begin{align}
   \label{eqn:LinearCase}
   \sup_{\probQ \in \mathbb{B}_\theta(\prob)} \wasserstein_\rho(f\#\probQ, f\#\Delta_{\bsR, \bsC}\#\prob) \leq \norm{A} \big( \theta+\theta_d \big),
   \end{align}
   where $\norm{A} \coloneqq \sup_{x \in \sX} \frac{\norm{Ax}}{\norm{x}}$ is the induced norm of $A$, which is equivalent to the global Lipschitz constant of $f$. 
\end{remark}

\begin{figure}[htbp]
    \centering
    \ifthenelse{\boolean{doublecolumn}}{
        \includegraphics[width=0.8\linewidth]{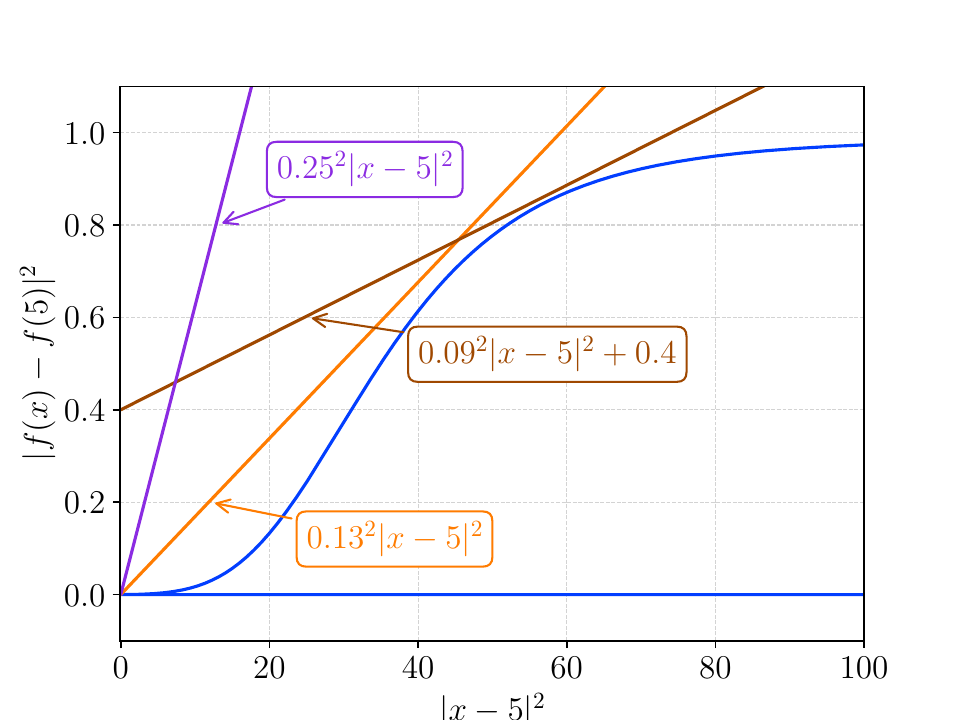}
    }{
        \includegraphics[width=0.6\linewidth]{Figures/Examples/comparison_approx_sigmoid.pdf}
    }
    \caption{There exists infinite admissible pairs $(\alpha, \beta)$ such that \eqref{eq:norm-linearization} holds. In particular, we show three of them: $(0.25^2, 0)$ (purple line), $(0.13^2, 0)$ (orange), and $(0.09^2, 0.4)$ (brown).}
    \label{fig:norm-linearization-representation}
\end{figure}

\begin{example}[Local vs. Lipschitz-based norm approximations]
\label{exampl:ImportanceOdAlphaBeta}
    Let $\rho=2$, and $f(x)= \frac{1}{1+e^{-x}}$ be a sigmoid function, whose Lipschitz constant w.r.t. the $L_2$-norm is $\sL_f=0.25$. We consider the location $c=5$; see Figure \ref{fig:norm-linearization-representation}. We can upper bound $|f(x)-f(5)|^2$ for any $x\in \realNum$ with \eqref{eq:norm-linearization} by choosing i) $(\alpha, \beta)=(0.25^2, 0)$, ii) $(\alpha, \beta)=(0.13^2, 0)$, or iii) $(\alpha, \beta)=(0.09^2, 0.4)$. The first observation is that since the location $c=5$ is far from the region where the global Lipschitz is found ($x=0$), $\alpha$ can be chosen to be considerably smaller than $\sL_f^2$ even for $\beta=0$ $(0.13^2 < 0.25^2)$. 
    Further, since $f$ is bounded, by increasing the bias $\beta$, one can decrease $\alpha$ even further (see the brown line). 
    In Section \ref{subsection:approx-algo}, we explain how to automatically select these parameters.
\end{example}

\subsection{No ambiguity case: $\theta = 0$.}
\label{sec:theta0}
As mentioned, when $\theta = 0$, the bound in Theorem \ref{th:bound} can be improved. In fact, in the proof of Theorem \ref{th:bound}, as will be discussed in detail in Section \ref{remark:ImprovedBoundTheta0}, to obtain a tractable reformulation, we seek the worst plausible joint distribution among all couplings such that one of the marginals is $\Delta_{\bsR, \bsC}\#\prob$ and the other is any distribution $\probQ \in \mathbb{B}_{\theta+\theta_d}(\prob)$. Instead, when $\prob$ is known, as in the case $\theta=0$, we can design a specific transport plan that generally leads to improved bounds, as shown in Theorem \ref{th:bound-zero-ball} below. 
\begin{theorem}[Uncertainty propagation under no ambiguity]\label{th:bound-zero-ball}
     For $\sX \subseteq \realNum^d$, let $\prob \in \sP_\rho (\sX)$. Assume a given partition $\bsR = \big\{ \sR_k \big\}_{k=1}^N$ and a set of locations $\bsC = \big\{ {c}_k \big\}_{k=1}^N$. For every $k\in\{1,...,N\}$, call $\evpi{k} \coloneqq \prob(\sR_k)$, and let $\alpha_k, \beta_k \in \realNum_{+}$ be such that for $x\in \sR_k$, it holds that
    \begin{align}\label{eq:norm-inequality-zero-budget}
        \norm{f(x) - f(\vc_k)}^\rho \leq \alpha_k \norm{x - \vc_k}^\rho + \beta_k.
    \end{align}
    Then,
    \begin{align}
        \maybeamp\wasserstein_\rho(f\#\prob, f\#\Delta_{\bsR, \bsC}\#\prob) \maybenonumber \maybenewline\maybeamp\maybeqquad \leq  \left( \sum_{k=1}^N \alpha_k \int_{\sR_k} \norm{x-c_k}^\rho d\prob(x) + \sum_{k=1}^N \evpi{k} \beta_k  \right)^\frac{1}{\rho} \label{eq:main-result-bound-zero-ball}
    \end{align}
\end{theorem}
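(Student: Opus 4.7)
The plan is to exploit the identity in \eqref{eq:wass-for-pushforward-equivalence} and use the fact that the Wasserstein distance is defined as an infimum over couplings, so any particular coupling in $\Gamma(\prob, \Delta_{\bsR, \bsC}\#\prob)$ yields an upper bound. Since $\prob$ is known (i.e. $\theta=0$), we do not need to worry about the ambiguity set, and we can construct an explicit and natural coupling, in the same spirit as in Proposition~\ref{prop:compute-theta-d}.

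Concretely, I would first define the coupling $\gamma^\star$ that sends the mass of $\prob$ in each region $\sR_k$ to the point $c_k$, namely for every Borel set $E \subseteq \sX \times \sX$,
\begin{equation*}
\gamma^\star(E) \coloneqq \sum_{k=1}^N \prob\bigl(\{x \in \sR_k : (x, c_k) \in E\}\bigr).
\end{equation*}
A short check shows that its first marginal is $\prob$ (using $\bigcup_k \sR_k = \sX$ with disjoint union) and its second marginal is $\sum_{k=1}^N \evpi{k}\delta_{c_k} = \Delta_{\bsR, \bsC}\#\prob$, so $\gamma^\star \in \Gamma(\prob, \Delta_{\bsR, \bsC}\#\prob)$.

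Plugging $\gamma^\star$ into \eqref{eq:wass-for-pushforward-equivalence} and disintegrating the integral along the regions gives
\begin{equation*}
\wasserstein_\rho(f\#\prob, f\#\Delta_{\bsR, \bsC}\#\prob)^\rho \leq \int_{\sX \times \sX} \norm{f(x)-f(y)}^\rho d\gamma^\star(x,y) = \sum_{k=1}^N \int_{\sR_k} \norm{f(x)-f(c_k)}^\rho d\prob(x).
\end{equation*}
The key observation — and the reason this is sharper than Theorem~\ref{th:bound} — is that the integration variable $x$ in the $k$-th summand lies in $\sR_k$, so it is enough to apply the linearization bound \eqref{eq:norm-inequality-zero-budget} locally on each region rather than globally on $\sX$. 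Applying it termwise yields
\begin{equation*}
\sum_{k=1}^N \int_{\sR_k} \norm{f(x)-f(c_k)}^\rho d\prob(x) \leq \sum_{k=1}^N \alpha_k \int_{\sR_k} \norm{x-c_k}^\rho d\prob(x) + \sum_{k=1}^N \beta_k\, \prob(\sR_k),
\end{equation*}
and taking the $\rho$-th root together with $\evpi{k} = \prob(\sR_k)$ gives exactly \eqref{eq:main-result-bound-zero-ball}.

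There is no real obstacle in this proof: the substantive content is already packaged in \eqref{eq:wass-for-pushforward-equivalence} and in the explicit coupling from Proposition~\ref{prop:compute-theta-d}. The only step that requires a bit of care is verifying that $\gamma^\star$ is a well-defined probability measure with the claimed marginals (measurability of the slices $\{x \in \sR_k : (x, c_k) \in E\}$ follows since each $\sR_k$ is Borel and the section map $x \mapsto (x, c_k)$ is continuous). Conceptually, the whole improvement over Theorem~\ref{th:bound} comes from the fact that, without ambiguity, we do not have to search over all couplings between $\Delta_{\bsR, \bsC}\#\prob$ and an unknown neighbor $\probQ$, so we avoid having to take $\alpha_{\max}$ across regions and instead keep each $\alpha_k$ attached to the mass it actually moves.
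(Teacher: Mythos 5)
Your proof is correct and follows essentially the same route as the paper: the paper also uses the explicit region-to-location coupling (its $\gamma^*$ from \eqref{eq:general-coupling}, validated in Lemma~\ref{lemma:coupling-signature}), plugs it into \eqref{eq:wass-for-pushforward-equivalence}, and applies the local linearization \eqref{eq:norm-inequality-zero-budget} termwise on each $\sR_k$. No gaps; your identification of the local (rather than global) linearization as the source of the improvement over Theorem~\ref{th:bound} matches the paper's own discussion.
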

The proof of Theorem \ref{th:bound-zero-ball} is reported in Appendix \ref{proof:th:bound-zero-ball}. Note that, differently from Theorem \ref{th:bound}, the norm overapproximation in \eqref{eq:norm-inequality-zero-budget} is local, i.e. for each region $\sR_k$, \eqref{eq:norm-inequality-zero-budget} has to hold for every $x \in \sR_k$, instead of $x \in \sX$ as in Theorem \ref{th:bound}. This is a consequence of the fact that, in the setting of Theorem \ref{th:bound-zero-ball}, $\prob$ is known with no uncertainty.

\subsection{Conservatism under ambiguity}\label{remark:ImprovedBoundTheta0}
We should stress that, although $\sup_{\probQ \in \mathbb{B}_\theta(\prob)} \wasserstein_\rho(f\#\probQ, f\#\Delta_{\bsR, \bsC}\#\prob)$ is right-continuous in $\theta=0$, the bound in Theorem \ref{th:bound} does not generally converge to the one in \ref{th:bound-zero-ball} as $\theta \downarrow 0$. 
To see this, note that the proof of Theorem~\ref{th:bound} is based on a worst-case analysis. In particular, as detailed in Appendix~\ref{proof:th:bound}, we define $ S_\theta(\mathbb{T}) := \Big\{ \gamma \in \sP(\sX \times \sX) : \int_{\sX \times \sX} \norm{x_1 - x_2}^\rho d\gamma(x_1, x_2) \leq \theta^\rho, \text{proj}_2 \# \gamma = \mathbb{T} \Big\}$, and show
\begin{align*}
    \maybeamp\sup_{\probQ \in \mathbb{B}_{\theta}(\prob)}\wasserstein_\rho(f\#\probQ, f\#\Delta_{\bsR, \bsC}\#\prob) \maybenewline\maybeamp\maybeqquad \leq \sup_{\gamma \in S_{\theta+\theta_d}(\Delta_{\bsR, \bsC}\#\prob)} \int_{\sX \times \sX} \norm{f(x)-f(y)}^\rho d\gamma(x, y)
\end{align*}
By taking the limit $\theta \downarrow 0$ on both sides, we have
\begin{align}
    \maybeamp\wasserstein_\rho(f\#\prob, f\#\Delta_{\bsR, \bsC}\#\prob)\label{eq:bound-analysis-thm4} \maybenewline\maybeamp\maybeqquad\leq \sup_{\gamma \in S_{\theta_d}(\Delta_{\bsR, \bsC}\#\prob)} \int_{\sX \times \sX} \norm{f(x)-f(y)}^\rho d\gamma(x, y) \maybenonumber  
\end{align}
On the other hand, to prove Theorem~\ref{th:bound-zero-ball}, 
we design a specific coupling $\gamma^* \in \Gamma(\prob, \Delta_{\bsR, \bsC}\#\prob)$ that achieves $\int_{\sX \times \sX} \norm{x-y}^\rho d\gamma^*(x, y) = \theta_d^\rho$, as reported in ~\eqref{eq:general-coupling} in the Appendix~\ref{proof:th:bound}, and bound
\begin{align}
    \maybeamp\wasserstein_\rho(f\#\prob, f\#\Delta_{\bsR, \bsC}\#\prob) \label{eq:bound-analysis-thm6}\maybenewline 
    \maybeamp\maybeqquad\leq \int_{\sX \times \sX} \norm{f(x)-f(y)}^\rho d\gamma^*(x, y) \maybenonumber
\end{align}
While it holds by construction that $\gamma^*$ is in $S_{\theta_d}(\Delta_{\bsR, \bsC}\#\prob)$, the elements in \(S_{\theta_d}(\Delta_{\bsR, \bsC}\#\prob)\) do not necessarily have $\prob$ as one of the marginals, i.e., are not necessarily a member of \(\Gamma(\prob, \Delta_{\bsR, \bsC}\#\prob)\). Hence, in general, the bounds in ~\eqref{eq:bound-analysis-thm4} and that in
~\eqref{eq:bound-analysis-thm6} for $\theta=0$ differ
\footnote{For instance, consider $\prob = \delta_{(0,0)}$, $\bsR \coloneqq \{ \realNum^2 \}$, $\bsC \coloneqq \{ (0, \theta_d) \}$, and $f(x) \coloneqq \text{diag}(2, 0.1)x$. Then, $\Delta_{\bsR, \bsC}\#\prob = \delta_{(0, \theta_d)}$, and $\gamma^*=\delta_{(0,0)\times(0,\theta_d)}$. Note that $\Tilde{\gamma} \coloneqq \delta_{(0, \theta_d)\times(\theta_d, \theta_d)} \in S_{\theta_d}(\delta_{(0, \theta_d)})$, and hence, $\int_{\sX \times \sX} \norm{f(x)-f(y)}^\rho d\gamma^*(x, y) = 0.1^\rho \theta_d^\rho$, while $\int_{\sX \times \sX} \norm{f(x)-f(y)}^\rho d\Tilde{\gamma}(x, y) = 2^\rho \theta_d^\rho$, a significantly larger value.}.

\section{Selecting approximate discrete distributions}\label{section:optimize-locs-convergence}
In this section, we provide an algorithmic approach to automatically select $\bsR$, $\bsC$, and the linearization coefficients in Theorems \ref{th:bound} and \ref{th:bound-zero-ball}. 
First, in Section \ref{subsection:approx-algo}, for any $c \in \sX \subseteq \realNum^d$, we present an algorithm to compute coefficient pairs $(\alpha, \beta)$ such that either \eqref{eq:norm-linearization} or \eqref{eq:norm-inequality-zero-budget} holds, and the corresponding error bound in \eqref{eq:main-result-bound} or \eqref{eq:main-result-bound-zero-ball} is minimized.
Then, we provide a practical approach to construct a partition $\bsR$ and a set of locations $\bsC$ that guarantees that the approximation error defined in \eqref{eq:ProbConverg} can be made arbitrarily small.

\subsection{Norm approximation algorithm}\label{subsection:approx-algo}
As observed in Example \ref{exampl:ImportanceOdAlphaBeta}, given a location $c \in \bsC$, there exist infinite combinations of $(\alpha, \beta)$ to generate the upper-bounds for $\norm{f(x)-f(c)}^\rho$ for all $x \in \sX$. 
Unfortunately, due to the possibly non-linear nature of \(f\), it is generally intractable to minimize the error bound in Theorem \ref{th:bound} or \ref{th:bound-zero-ball} with respect to all feasible linearization combinations. 
Hence, in practice, we 
focus on combinations of type (i) \((0, \beta)\) and type (ii) \((\alpha, 0)\), which can be computed efficiently.
Specifically, for combinations of type (i), where $f$ remains bounded in the region where the linear bounds must hold, we select $(\alpha, \beta) = (0,\sup_{x\in\sX} \norm{f(x)-f(c)}^\rho)$.
For type (ii) combinations, we select \((\alpha,\beta)=(\sup_{x\in \sX} \norm{f(x)-f(c)}^\rho/\norm{x-c}^\rho, 0)\). Due to the typically non-convex nature of these optimization problems, in practice, we utilize approximate solutions obtained via bound propagation techniques.\footnote{For our experiments, we use the linear bound propagation techniques from \cite{mathiesen2022safety} to compute the linear maps.} 
That is, for each region \(\sS_k\) of a $\sX$-partition $\bsS$, we compute 
linear maps \(\check{\mA_k}(\vx-c)\) and \(\hat{\mA}_k(\vx-c)\), 
and vectors \(\check{\vb}_k\) and \(\hat{\vb}_k\),
that satisfy: 
\begin{gather}
    \check{\mA}_k(\vx-c) \preceq f(\vx) - f(c) \preceq \hat{\mA}_k(\vx-c)\label{eq:linear-maps-condition}\\
    \check{\vb}\preceq f(\vx) - f(c) \preceq \hat{\vb}, 
\end{gather}
for all \(\vx\in\sS_k\). We then use that
\begin{equation}\label{eq:heuristics-for-alpha}
    \sup_{x\in \sX}\frac{\norm{f(x)-f(c)}^\rho}{\norm{x-c}^\rho} \leq 
    \max_{k \in \{1,...,N\}}\left(\|\check{\mA}_k\|^\rho,\|\hat{\mA}_k\|^\rho\right).
\end{equation}
and
\begin{equation}\label{eq:heuristics-for-beta}
    \sup_{x\in\sX} \norm{f(x)-f(c)}^\rho \leq \max_{k \in \{1,...,N\}}\left(\|\check{\vb}_k\|^\rho,\|\hat{\vb}_k\|^\rho\right)
\end{equation} 
to set \(\alpha\) for combinations of type (ii), and \(\beta\) for combinations of type (i) and respectively. Note that, for Theorem \ref{th:bound-zero-ball} where the norm-linearization has to hold only over a region \(\sR\subseteq\sX\), we follow the same procedure, replacing \(\sX\) by \(\sR\).

Algorithm~\ref{alg:compute-bound} details a procedure to select $(\alpha_k,0)$ or $(0,\beta_k)$ for all $c_k \in \bsC$ for Theorem~\ref{th:bound}. The case for Theorem \ref{th:bound-zero-ball} follows similarly. Algorithm~\ref{alg:compute-bound} is based on the fact that Theorem~\ref{th:bound} only depends on the maximum value of the $\alpha_k$ coefficients for all locations $c_k \in \bsC$, so, by ordering those coefficients in descending order, we can iteratively verify whether replacing $\alpha_k\norm{x-c_k}^\rho$ approximations in ~\eqref{eq:norm-linearization} for $\beta_k$ lead to a tighter bound. As discussed in Remark~\ref{rem:linearSyst}, this will generally be the case when $\prob(\sR_k)$ is low.   
More specifically, we start by computing $\bar\alpha \coloneqq (\alpha_1,...,\alpha_N)$ and $\bar\beta \coloneqq (\beta_1,...,\beta_N)$ using \eqref{eq:heuristics-for-alpha} and \eqref{eq:heuristics-for-beta}, respectively, for each $c_k \in \bsC$ (line 2), and $\theta_d$ in \eqref{def:theta_d-definition} as explained in Remark \ref{remark:theta-d-how-to-compute} (line 3). We then compute the first candidate for the bound, by applying Theorem \ref{th:bound} with $\bar\alpha$ (line 4). In line 5, we sort $\bar\alpha$ in descending order (and sort accordingly $\bar\beta, \bsR, \bsC$). As mentioned above, the strategy is to try to replace the highest $\alpha_k$ by zero, and include instead $\beta_k$, while verifying if the bound decreases. This is implemented in the \emph{for} loop in lines 6-12. 
\RestyleAlgo{ruled}
\SetKwComment{Comment}{/* }{ */}
\begin{algorithm}[h]
\DontPrintSemicolon
\caption{Compute least conservative bound in Thm \ref{th:bound} for a given quantization operator
}\label{alg:compute-bound}
\KwIn{$\sX$-partition $\bsR$, set of locations $\bsC$, radius $\theta$, distribution $\prob$}
\KwOut{Least conservative $\rho$-Wasserstein bound in Thm \ref{th:bound} given $\bsR, \bsC$}
\SetKwFunction{FMain}{BoundGivenQuantizationOperator}
\SetKwProg{Fn}{function}{:}{}
\Fn{\FMain{$\bsR, \bsC, \theta, \prob$}}{
    $(\bar{\alpha},\bar{\beta}) \gets (\texttt{Eqns \eqref{eq:heuristics-for-alpha} \& \eqref{eq:heuristics-for-beta} for }c_k \in \bsC)_{k=1}^{|\bsC|}$\;
    $\theta_d \gets \texttt{Eqn }\eqref{def:theta_d-definition}$\;
    $\wasserstein \gets \max_{\alpha \in \bar{\alpha}  } \alpha (\theta + \theta_d)$\;
    $\bar{\alpha}_\text{sorted}, \bar{\beta}_\text{sorted}, \bsR_\text{sorted}, \bsC_\text{sorted} \gets \texttt{sort descendingly according to }\bar{\alpha}$\;
    \For{$k \in \{1,...,|\bsC|\}$}{
        $b_k \gets \sum_{j=1}^{k} \prob(\sR_{\text{sorted}, j}) \bar{\beta}_{\text{sorted},j}$\;
        $\Tilde{\wasserstein} \gets \bigg(\bar{\alpha}_{\text{sorted},k+1} (\theta + \theta_d)^\rho + b_k \bigg)^\frac{1}{\rho}$\;
        \If{
            $\wasserstein\leq\Tilde{\wasserstein}$
        }{
            \texttt{break}\;}
            {
        \Else{
            $\wasserstein \gets \Tilde{\wasserstein}$\;
        }}
    }
    \Return{$\wasserstein$}
}
\end{algorithm}
\subsection{Constructing a converging quantization operator}\label{section:constructing-converging-quant-operator}
After having discussed how to select the linearization coefficients in Theorem \ref{th:bound} and \ref{th:bound-zero-ball}, what is left to do is to explain how to effectively construct a quantization operator $\Delta_{\bsR, \bsC}$, i.e. a $\sX$-partition $\bsR$ and a set of locations $\bsC \subset \sX$, such that the convergence requirement in Problem \ref{prob:main} holds for any given $\epsilon >0$. 
We start with the following lemma, which is a straightforward consequence of the triangular inequality, showing that to satisfy Problem \ref{prob:main}, it is enough to select $\Delta_{\bsR, \bsC}$ to minimize $\wasserstein_{\rho}(f  \# \prob, f \# \Delta_{\bsR, \bsC} \# \prob).$ This result implies that to guarantee an arbitrarily small \(\epsilon\), it is enough to optimize $\bsR$, $\bsC$ w.r.t. to $\prob$ even if $\theta>0$.

\begin{lemma}\label{prop:bound-on-convergence-problem-1}
    Let $\prob \in \sP_\rho(\sX)$. For any $\sX$-partition $\bsR$, and set of locations $\bsC$, it holds that
    \begin{align}
        \maybeamp\Bigg| \sup_{\probQ \in \mathbb{B}_\theta(\prob)} \wasserstein_{\rho}(f \# \probQ, f \# \Delta_{\bsR, \bsC} \# \prob) \label{eq:bound-approx-error-triangle-ineq}\maybenewline\maybeamp\maybeqquad - \sup_{\probQ \in \mathbb{B}_\theta(\prob)} \wasserstein_{\rho}(f \# \probQ, f  \# \prob) \Bigg| \leq \wasserstein_{\rho}(f  \# \prob, f \# \Delta_{\bsR, \bsC} \# \prob) \maybenonumber
    \end{align}
\end{lemma}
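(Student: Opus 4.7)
The plan is to prove this by a standard application of the triangle inequality for the $\rho$-Wasserstein metric, combined with the fact that the sup operator is nonexpansive in the sense that $|\sup_x A(x) - \sup_x B(x)| \leq \sup_x |A(x) - B(x)|$.

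First, I would fix an arbitrary $\probQ \in \mathbb{B}_\theta(\prob)$ and note that, because $f\#\probQ$, $f\#\prob$, $f\#\Delta_{\bsR, \bsC}\#\prob$ all belong to $\sP_\rho(\sY)$ (this follows from the piecewise Lipschitz continuity of $f$ and the finite $\rho$-moments of the involved distributions), the $\rho$-Wasserstein distance is a genuine metric on the space in which they lie. In particular, the triangle inequality gives
\begin{align*}
    \wasserstein_\rho(f\#\probQ, f\#\Delta_{\bsR,\bsC}\#\prob) &\leq \wasserstein_\rho(f\#\probQ, f\#\prob) + \wasserstein_\rho(f\#\prob, f\#\Delta_{\bsR,\bsC}\#\prob), \\
    \wasserstein_\rho(f\#\probQ, f\#\prob) &\leq \wasserstein_\rho(f\#\probQ, f\#\Delta_{\bsR,\bsC}\#\prob) + \wasserstein_\rho(f\#\prob, f\#\Delta_{\bsR,\bsC}\#\prob),
\end{align*}
so that
\begin{equation*}
    \bigl| \wasserstein_\rho(f\#\probQ, f\#\Delta_{\bsR,\bsC}\#\prob) - \wasserstein_\rho(f\#\probQ, f\#\prob) \bigr| \leq \wasserstein_\rho(f\#\prob, f\#\Delta_{\bsR,\bsC}\#\prob).
\end{equation*}
Observe that the right-hand side does not depend on $\probQ$.

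Next, I would use the general inequality $|\sup_x A(x) - \sup_x B(x)| \leq \sup_x |A(x)-B(x)|$ (which follows from writing $A(x) \leq B(x) + \sup_y|A(y)-B(y)|$, taking sup over $x$, and symmetrising) with $A(\probQ) \coloneqq \wasserstein_\rho(f\#\probQ, f\#\Delta_{\bsR,\bsC}\#\prob)$ and $B(\probQ) \coloneqq \wasserstein_\rho(f\#\probQ, f\#\prob)$, over $\probQ \in \mathbb{B}_\theta(\prob)$. Combining with the pointwise bound above, the supremum of the right-hand side is trivially bounded by $\wasserstein_\rho(f\#\prob, f\#\Delta_{\bsR,\bsC}\#\prob)$ itself, which yields \eqref{eq:bound-approx-error-triangle-ineq}.

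I do not anticipate any real obstacle here: the argument is purely metric and uses no structure of $f$, $\prob$ or the quantization beyond the well-definedness of the Wasserstein distances involved. The only point that deserves a line of care is ensuring that all four pushforward measures have finite $\rho$-th moment so that $\wasserstein_\rho$ is a metric on the ambient space; this is inherited from $\prob \in \sP_\rho(\sX)$, from $\probQ \in \mathbb{B}_\theta(\prob) \subseteq \sP_\rho(\sX)$ by definition of the ambiguity set, and from the standing piecewise Lipschitz assumption on $f$ which guarantees that pushforwards by $f$ preserve $\sP_\rho$.
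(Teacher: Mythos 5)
Your proof is correct and follows essentially the same route as the paper: the triangle inequality applied pointwise in $\probQ$, followed by taking suprema on both sides of the two resulting inequalities (your ``sup is nonexpansive'' lemma is exactly this combination). The additional remark on finite $\rho$-th moments of the pushforwards is a harmless extra point of care that the paper leaves implicit.
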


Lemma \ref{prop:bound-on-convergence-problem-1} is used in the next theorem to show that even taking  $\bsR$ as a uniform partitioning of any compact set containing enough probability mass of $\prob$ to select $\bsR, \bsC$ can guarantee a solution to Problem \ref{prob:main}. An improved, non-uniform, partitioning scheme will then be given in Remark \ref{remark:algorithm-quantization-gaussian-cas}.
\begin{theorem}[Convergence rate]
\label{prop:convergence-algorithm-error-approx}
    For $\sX \subseteq \realNum^d$, let $\prob \in \sP_\rho(\sX)$ and $\rho\geq 1$. For any $\epsilon > 0$, consider a cubic compact set $\bar\sX \subseteq \sX$ such that $\int_{\sX \setminus \bar\sX} \norm{x - \bar{c}}^\rho d\prob(x) \leq \frac{\epsilon^\rho}{2 \sL_f^\rho}$ for some $\bar{c} \in \sX$. Further, consider $\bsR \coloneqq \{ \sR_k \}_{k=1}^N$ as a uniform $\bar\sX$-partition of $\bar\sX$ in $N \geq \left( \frac{2^\frac{1}{\rho} \sL_f d^\frac{1}{\rho} \norm{\bar\sX}_\infty}{\epsilon} \right)^d$ hypercubic regions, and $\bsC$ as set of the centers of each hypercube $\sR_k$. 
    Then, for $\bsR^* = \bsR \cup \{ \sX \setminus \bar\sX \}$ and $\bsC^* = \bsC \cup \{ \bar{c} \}$, it holds that:
    \begin{align}
       \maybeamp\Bigg| \sup_{\probQ \in \mathbb{B}_\theta(\prob)} \wasserstein_{\rho}(f \# \probQ, f \# \Delta_{\bsR^*, \bsC^*} \# \prob) \maybenonumber \maybenewline\maybeamp\maybeqquad\maybeqquad\maybeqquad - \sup_{\probQ \in \mathbb{B}_\theta(\prob)} \wasserstein_{\rho}(f \# \probQ, f  \# \prob) \Bigg| \leq \epsilon.
    \end{align}
\end{theorem}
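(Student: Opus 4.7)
The plan is to reduce the statement, via Lemma~\ref{prop:bound-on-convergence-problem-1}, to proving the single inequality
\[
\wasserstein_\rho(f\#\prob,\, f\#\Delta_{\bsR^*,\bsC^*}\#\prob) \;\le\; \epsilon,
\]
after which the \(\theta\)-dependent supremum on the left of Problem~\ref{prob:main} disappears automatically. This is the step where the assumption on \(\bar{\sX}\) and the choice of \(N\) carry the argument; the rest is arithmetic.

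To bound the displayed Wasserstein distance, I would use identity~\eqref{eq:wass-for-pushforward-equivalence} together with the explicit product-style coupling \(\gamma^*\in\Gamma(\prob,\Delta_{\bsR^*,\bsC^*}\#\prob)\) that sends all mass of \(\prob\) lying in \(\sR^*_k\) to the atom \(c^*_k\) (the same construction already used in Proposition~\ref{prop:compute-theta-d} and in the proof of Theorem~\ref{th:bound-zero-ball}). This coupling gives
\[
\wasserstein_\rho(f\#\prob,\, f\#\Delta_{\bsR^*,\bsC^*}\#\prob)^\rho \;\le\; \sum_{k}\int_{\sR^*_k}\bigl\lVert f(x)-f(c^*_k)\bigr\rVert^\rho\,d\prob(x),
\]
at which point the (piecewise) Lipschitz assumption on \(f\) with constant \(\sL_f\) upgrades this to
\[
\wasserstein_\rho(f\#\prob,\, f\#\Delta_{\bsR^*,\bsC^*}\#\prob)^\rho \;\le\; \sL_f^\rho\sum_k\int_{\sR^*_k}\norm{x-c^*_k}^\rho\,d\prob(x).
\]

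Next, I would split the sum according to \(\bsR^* = \bsR\cup\{\sX\setminus\bar\sX\}\): the ``tail'' term \(\sL_f^\rho\int_{\sX\setminus\bar\sX}\norm{x-\bar c}^\rho d\prob(x)\) is bounded by \(\epsilon^\rho/2\) directly from the hypothesis on \(\bar\sX\). For the ``bulk'' term over the uniform partition of \(\bar\sX\), each hypercube \(\sR_k\) has side \(2\norm{\bar\sX}_\infty / N^{1/d}\) and \(c_k\) is its center, so for every \(x\in\sR_k\) one has \(\norm{x-c_k}^\rho \le d\,\norm{\bar\sX}_\infty^\rho/N^{\rho/d}\) in the \(L_\rho\) norm. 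Summing over \(k\) with \(\sum_k \prob(\sR_k)\le 1\) yields
\[
\sL_f^\rho\sum_{k=1}^N\int_{\sR_k}\norm{x-c_k}^\rho\,d\prob(x) \;\le\; \frac{\sL_f^\rho\, d\,\norm{\bar\sX}_\infty^\rho}{N^{\rho/d}}.
\]
Plugging in the lower bound on \(N\) makes this quantity at most \(\epsilon^\rho/2\), so the two pieces combine to \(\epsilon^\rho\) and taking \(\rho\)-th roots closes the argument.

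The main obstacle I anticipate is the careful bookkeeping with the exponent \(\rho\) and dimension \(d\): getting the constant \(2^{1/\rho}d^{1/\rho}\norm{\bar\sX}_\infty\) in the threshold for \(N\) right requires an accurate \(L_\rho\)-diameter estimate for an axis-aligned hypercube and a correct \(\rho\)-th-root redistribution of the \(\epsilon^\rho/2 + \epsilon^\rho/2\) split. A minor secondary concern is ensuring the coupling \(\gamma^*\) remains well-defined even when some \(\sR^*_k\) (in particular the unbounded outer region) carries nontrivial mass; this is handled by using the regular conditional distribution of \(\prob\) given \(\sR^*_k\) on the first marginal and \(\delta_{c^*_k}\) on the second, exactly as in Proposition~\ref{prop:compute-theta-d}.
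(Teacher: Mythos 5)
Your proposal is correct and follows essentially the same route as the paper: reduce via Lemma~\ref{prop:bound-on-convergence-problem-1} to bounding $\wasserstein_\rho(f\#\prob, f\#\Delta_{\bsR^*,\bsC^*}\#\prob)$, control it by $\sL_f^\rho\sum_k\int_{\sR^*_k}\norm{x-c^*_k}^\rho d\prob(x)$ (the paper does this by invoking Theorem~\ref{th:bound-zero-ball} with $(\alpha_k,\beta_k)=(\sL_f,0)$, which is exactly the coupling argument you inline), and then split into the tail term, bounded by $\epsilon^\rho/2$ by hypothesis, and the bulk term, bounded by $\epsilon^\rho/2$ via the coordinatewise $L_\rho$ estimate $\norm{x-c_k}^\rho\le d\,\norm{\bar\sX}_\infty^\rho/N^{\rho/d}$ on each hypercube. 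Your constant bookkeeping matches the paper's, so nothing further is needed.
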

The convergence rate reported in Theorem  \ref{prop:convergence-algorithm-error-approx} is conservative due to two factors: i) 
it relies on linearization coefficients \((\alpha,\beta)=(\lipschitz_f,0)\) in Theorem \ref{th:bound-zero-ball}, which generally leads in over-conservative error bounds, as discussed in Remark~\ref{rem:linearSyst}, ii) Theorem \ref{prop:convergence-algorithm-error-approx} is proven w.r.t. a uniform partitioning of an appropriately selected compact set. 
Consequently, it is evident that non-uniform partitioning approaches that directly minimize the bounds in Theorem \ref{th:bound-zero-ball} would lead to improved bounds. In particular, we can rely on Lemma \ref{prop:bound-on-convergence-problem-1}, which implies that the quantization error is bounded by
\begin{align}
    \maybeamp\wasserstein_\rho(f\#\prob, f\#\Delta_{\bsR, \bsC}\#\prob) \label{def:h-function}\maybenewline\maybeamp\leq \underbrace{ \left( \sum_{k=1}^N \alpha_k \int_{\sR_k} \norm{x-c_k}^\rho d\prob(x) + \sum_{k=1}^N \evpi{k}\beta_k \right)^\frac{1}{\rho} }_{=: \boundProbDef} \leq \sL_f \theta_d,  \maybenonumber
\end{align}
where $\boundProbDef$ is the error bound from Theorem~\ref{th:bound-zero-ball}.
Consequently, by selecting $\bsR$ and $\bsC$ to minimize $\theta_d$, we indirectly reduce the error bounds. This approach may lead to greatly improved bounds compared to a uniform partitioning approach, as we will illustrate empirically in Section \ref{section:experimental-results}.
\begin{remark}[Partitioning for normalizing flows of Gaussians]\label{remark:algorithm-quantization-gaussian-cas}
    When \(\prob\) is Gaussian or a normalizing flow of a latent Gaussian distribution \cite{papamakarios2021normalizing},\footnote{For normalizing Gaussian distribution flows, i.e. $\prob \coloneqq g \# \sN(\mu, \Sigma)$ for some known piecewise Lipschitz continuous function $g$, one can use that $\wasserstein_\rho(g\#\sN(\mu, \Sigma), g\#\Delta_{\bsR, \bsC}\#\sN(\mu, \Sigma)) \leq \sL_g \wasserstein_\rho(\sN(\mu, \Sigma), \Delta_{\bsR, \bsC}\#\sN(\mu, \Sigma))$.} 
    we can rely on Algorithm 2 from \cite{adams2024finite} 
    to obtain $\bsC$ that minimize $\theta_d$, with \(\bsR\) being the Voronoi partition w.r.t. $\bsC$.
    Since Algorithm 2 from \cite{adams2024finite} guarantees that $\theta_d$ converges to zero as $N$ increases, we can iteratively increase the number of locations \(N\) until $\lipschitz_f\theta_d\leq\epsilon$, and consequently, according to ~\eqref{def:h-function}, $\boundProbDef \leq \epsilon$, where \(\epsilon>0\) is the desired error threshold. 
    The non-uniform partition \(\bsR\) resulting from Algorithm 2 in \cite{adams2024finite} typically leads to a convergence rate that is significantly better than the one presented in Theorem \ref{prop:convergence-algorithm-error-approx}.
\end{remark}

\begin{figure}[ht]
    \centering
    \includegraphics[width=1.0\linewidth]{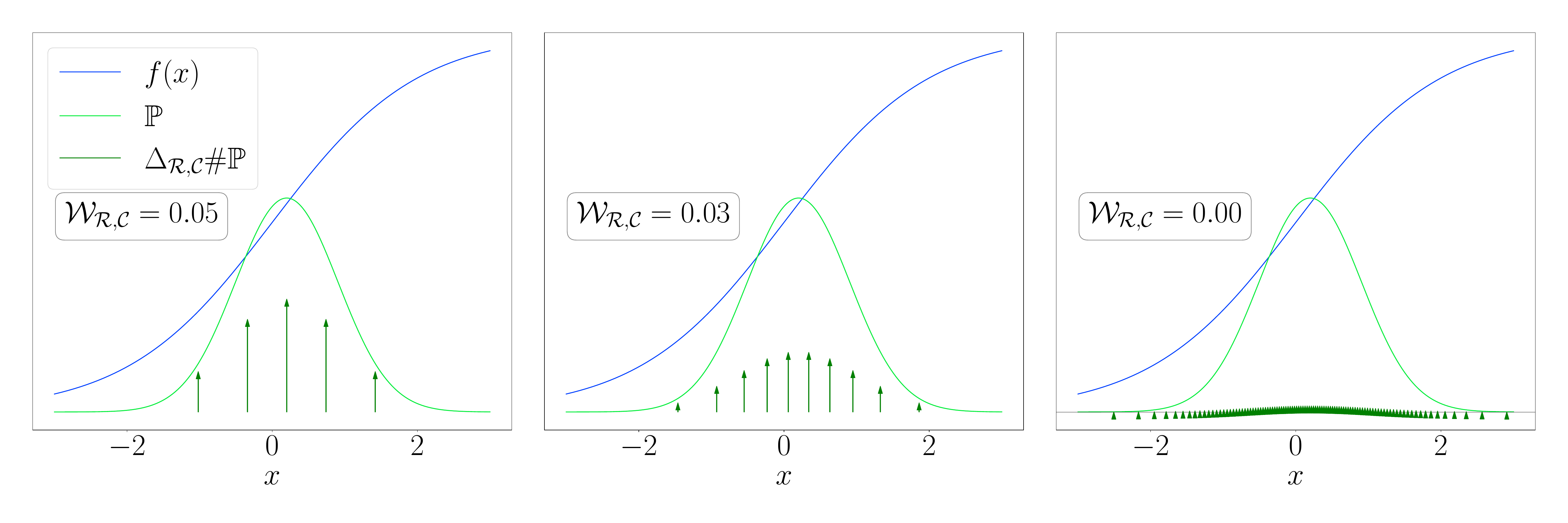}
    \caption{
    Quantization of $\prob = \sN(0.2, 0.5)$, $\Delta_{\bsR, \bsC}\#\prob$, constructed as described in Remark \ref{remark:algorithm-quantization-gaussian-cas}, for $|\bsC| \in \{ 5, 10, 10^2 \}$, and the corresponding bound $\boundProbDef$ for $f$ the sigmoid function.}
    \label{fig:example3}
\end{figure}

\begin{example}[Efficacy of Algorithm \ref{alg:compute-bound}]\label{example:continuation-part2}
    Let $\rho=2$. Consider again the sigmoid function $f:\realNum \rightarrow \realNum$ of Example \ref{exampl:ImportanceOdAlphaBeta}.
    Further, let $\prob = \sN(0.2, 0.5)$. Figure \ref{fig:example3}, illustrates $\Delta_{\bsR, \bsC}\#\prob$ and shows $\boundProbDef$ for different number of location $|\bsC|$. 
    Note how the bound monotonically decreases and reaches a value in the order of $10^{-2}$ with only $10$ locations.
 \end{example}

\section{Iterative predictions for stochastic dynamical systems}\label{section:dynamical-systems}
In this section, we show how our results can be used to generate provably correct discrete approximations for stochastic dynamical systems with formal guarantees in the $\rho$-Wasserstein distance. To do so, we consider the general model of a discrete-time stochastic process already introduced in Example \ref{example:initial}:
\begin{equation}\label{eq:system-def}
    x_{t+1} = f(x_t, \omega_t), \quad x_0 \sim \prob_{x_0}, \omega_t \sim \prob_\omega,
\end{equation}
where the measurable function $f:\sX \times \sW \rightarrow \sX$, 
with $\sX \subseteq \realNum^d$ as the state space and $\sW \subseteq \realNum^q$ as uncertainty space, 
represents the one-step dynamics of the system. Here, $x_0$ is the initial condition of the system, assumed to be distributed with distribution $\prob_{x_0} \in \sP(\sX)$, and $\omega_t$ is an i.i.d. process noise distributed according to $\prob_\omega \in \sP(\sW)$.
We denote the state-noise joint distribution at time \(t\) by $\prob_t \coloneqq \prob_{x_{t}} \times \prob_\omega$.
As previously mentioned in Example \ref{example:initial}, if $f$ is non-linear or $\prob_{w}$ non-Gaussian, the distribution $\prob_{x_t}$ of the system at time $t$ becomes intractable. In this Section, we show how our solution of Problem \ref{prob:main} allows one to obtain a tractable (discrete) distribution $\probHat_{x_t} \in \sP(\sX)$ such that $\wasserstein_{\rho}(\prob_{x_t}, \probHat_{x_t}) \leq \delta$, for a given error threshold $\delta > 0$ for any  $t>0$.

Our approach is summarized in Figure \ref{fig:approximation-scheme-dynamical-systems}, where for a time $t$, we denote by  $\bsC_t = \{ c_{t,1}, ..., c_{t, N_t} \}$, $\bsR_t = \{ \sR_{t, 1},...,\sR_{t, N_t} \}$, respectively, the locations and regions for the discrete approximation of the system at time $t$, to emphasize how this can change over time. 
To describe our approach, we start with $t=0$, assuming $\prob_{x_0}$ is known, and setting $\probHat_{x_0} = \prob_{x_0}$.
For $t=1$, 
the true state distribution is given by $\prob_{x_1} = f\#\prob_0$, which, as we have previously argued, is generally intractable, 
Thus, as showed in Figure \ref{fig:approximation-scheme-dynamical-systems}, we define the approximate state-noise joint distribution as $\probHat_0 = \probHat_{x_{0}} \times \prob_\omega$. 
We then apply the quantization operation using a $(\sX \times \sW)$-partition $\bsR_0$ and a set of locations $\bsC_0 \subset \sX \times \sW$, and propagate it through $f$, resulting in the approximate state (discrete) distribution $\probHat_{x_1} = f\#\Delta_{\bsR_0, \bsC_0}\#\probHat_0$.
Note that the latter consists of a straightforward application of a $f$ transformation to the support of a discrete distribution, hence providing a tractable propagation through time. This process is repeated for the next time steps, where $(\sX \times \sW)$-partitions $\bsR_t$ and locations $\bsC_t$ are chosen such that the requirement in \eqref{prob:main} is met for some predefined $\epsilon>0$. 

The next result shows how our framework can be applied to bound $\wasserstein_\rho(\prob_{x_t}, \probHat_{x_t})$ for any $t\geq 0$. Furthermore, critically, we show that if $f$ is contractive, the resulting error bounds propagation will reach a fixed point, allowing for infinite-time prediction horizons. 
\begin{figure*}[t]
    \centering
    \begin{tikzcd}
    \textit{Actual state distr.} & \prob_{x_0} \arrow{r} \arrow{d} & \prob_{x_1} \arrow{r} & \prob_{x_2} \arrow{r} & \dots \\
    \textit{Joint distr.} & \underbrace{\prob_{x_0} \times \prob_\omega}_{\probHat_0} \arrow[orange]{d}{\theta_{d,0}} & \underbrace{\probHat_{x_1} \times \prob_\omega}_{\probHat_1} \arrow[orange]{d}{\theta_{d,1}} & \underbrace{\probHat_{x_2} \times \prob_\omega}_{\probHat_2} \arrow[orange]{d}{\theta_{d,2}} & \dots \\
    \textit{Quantization} & \Delta_{\bsR_0, \bsC_0} \# \probHat_0 \arrow{d} & \Delta_{\bsR_1, \bsC_1} \# \probHat_1 \arrow{d} & \Delta_{\bsR_2, \bsC_2} \# \probHat_2 \arrow{d} & \dots \\
    \textit{Approximator} & \underbrace{ f \# \Delta_{\bsR_0, \bsC_0} \# \probHat_0}_{ \probHat_{x_1} } & \underbrace{ f \# \Delta_{\bsR_1, \bsC_1} \# \probHat_1}_{ \probHat_{x_2} } & \underbrace{ f \# \Delta_{\bsR_2, \bsC_2} \# \probHat_2}_{ \probHat_{x_3} } & \dots \\
    \textit{Bounds} & \text{Thm } \ref{th:bound-zero-ball} & \text{Thm } \ref{th:bound} & \text{Thm } \ref{th:bound} & \dots \\
    \arrow[{sloped}, from=4-2, to=2-3]
    \arrow[{sloped}, from=4-3, to=2-4]
    \arrow[{sloped}, from=4-4, to=2-5]
    \end{tikzcd}
    \vspace{-3mm}
    \caption{Discrete approximation scheme for stochastic dynamical systems with formal guarantees on the $\rho$-Wasserstein distance, $\wasserstein_\rho(\prob_{x_t}, \probHat_{x_t})$.}
    \label{fig:approximation-scheme-dynamical-systems}
\end{figure*}
\begin{theorem}[Approximation error dynamics]\label{thm:how-to-propagate-stochastic-systems}
Given $\epsilon>0$, let $\bsR_t$ be $(\sX \times \sW)$-partitions and $\bsC_t \subset \sX \times \sW$ sets of locations such that $\theta_{d,t} = \bigg( \sum_{k=1}^{N_t} \int_{\sR_{t, k}} \norm{x-c_{t, k}}^\rho d\probHat_t(x) \bigg)^\frac{1}{\rho} \leq \epsilon$ for every $t \geq 0 $. Consider the following iterative process describing the approximation error evolution for $t\in \mathbb{N}_{>0}$:
\begin{align*}
    & \theta_1 = {\left( \sum_{k=1}^{N_0} \alpha_{0, k} \int_{\sR_{0, k}} \norm{x-c_{0, k}}^\rho d\probHat_0(x) + \sum_{k=1}^{N_0} \probHat(\sR_{0, k})\beta_{0,k} \right)^\frac{1}{\rho} }, \\
    & \theta_{t+1} = \left(\alpha_{\text{max},t}( \theta_t + \epsilon )^\rho + \sum_{k=1}^{N_t} \probHat(\sR_{t, k}) \beta_{t, k} \right)^\frac{1}{\rho}.
\end{align*}
Then, for any $t>0$, the following holds:

i) $\wasserstein_\rho(\prob_{x_{t}}, \probHat_{x_{t}}) \leq \theta_t \,$.

ii) If the dynamics $f$ in \eqref{eq:system-def} is Lipschitz continuous in $(x, \omega)$ with constant $\sL_f <1$, then
$$ \lim_{t \to \infty} \wasserstein_\rho(\prob_{x_{t}}, \probHat_{x_{t}}) \leq   \frac{\sL_f}{1- \sL_f }\epsilon \,.$$
\end{theorem}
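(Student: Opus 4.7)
The plan is to prove part (i) by induction on $t$, treating the base case ($t=1$) separately via Theorem~\ref{th:bound-zero-ball} since $\probHat_0 = \prob_{x_0}\times\prob_\omega = \prob_0$ is known exactly, and the inductive step via Theorem~\ref{th:bound} with a Wasserstein ball of radius $\theta_t$ around $\probHat_t$. For the base case, we would note that $\wasserstein_\rho(\prob_{x_1},\probHat_{x_1}) = \wasserstein_\rho(f\#\prob_0, f\#\Delta_{\bsR_0,\bsC_0}\#\probHat_0)$ and apply Theorem~\ref{th:bound-zero-ball} with partition $\bsR_0$, locations $\bsC_0$, and linearization coefficients $(\alpha_{0,k},\beta_{0,k})$, which yields exactly the definition of $\theta_1$.

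For the inductive step, assume $\wasserstein_\rho(\prob_{x_t},\probHat_{x_t}) \leq \theta_t$. The first subtlety is that Theorem~\ref{th:bound} concerns joint distributions over $\sX\times\sW$, whereas the inductive hypothesis bounds only the marginals over $\sX$. I would resolve this via the standard product-coupling argument: the product of an optimal coupling between $\prob_{x_t}$ and $\probHat_{x_t}$ with the identity coupling on $\prob_\omega$ is admissible in $\Gamma(\prob_t,\probHat_t)$, hence
\begin{equation*}
    \wasserstein_\rho(\prob_t,\probHat_t) = \wasserstein_\rho(\prob_{x_t}\times\prob_\omega,\probHat_{x_t}\times\prob_\omega) \leq \wasserstein_\rho(\prob_{x_t},\probHat_{x_t}) \leq \theta_t.
\end{equation*}
Hence $\prob_t \in \mathbb{B}_{\theta_t}(\probHat_t)$. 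Applying Theorem~\ref{th:bound} centered at $\probHat_t$ with radius $\theta_t$, and using the hypothesis $\theta_{d,t}\leq\epsilon$, we obtain
\begin{equation*}
    \wasserstein_\rho(f\#\prob_t, f\#\Delta_{\bsR_t,\bsC_t}\#\probHat_t) \leq \Bigl(\alpha_{\max,t}(\theta_t+\epsilon)^\rho + \sum_{k=1}^{N_t}\probHat(\sR_{t,k})\beta_{t,k}\Bigr)^{\!1/\rho} = \theta_{t+1},
\end{equation*}
which closes the induction since the left-hand side equals $\wasserstein_\rho(\prob_{x_{t+1}},\probHat_{x_{t+1}})$.

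For part (ii), if $f$ is globally Lipschitz with constant $\sL_f$, then for every location $c_{t,k}$ we may take $\alpha_{t,k} = \sL_f^\rho$ and $\beta_{t,k}=0$, as noted in Remark~\ref{rem:linearSyst}. This makes $\alpha_{\max,t} = \sL_f^\rho$ and collapses the recursion in part (i) to the affine scalar iteration $\theta_{t+1} = \sL_f(\theta_t + \epsilon)$ (with $\theta_1 \leq \sL_f\epsilon$ by the same choice applied to the base case). Solving this in closed form gives $\theta_t = \sL_f^t\,\theta_0 + \sL_f\epsilon\sum_{j=0}^{t-1}\sL_f^{j}$, and since $\sL_f<1$ the geometric sum converges to $\sL_f\epsilon/(1-\sL_f)$, yielding the claimed asymptotic bound after combining with part (i).

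The main obstacle I anticipate is the lifting step that relates the marginal Wasserstein bound on $\prob_{x_t}$ vs.\ $\probHat_{x_t}$ to an ambiguity-ball statement on the joint distributions $\prob_t,\probHat_t$ required to invoke Theorem~\ref{th:bound}; this must be handled cleanly so that the shared marginal $\prob_\omega$ contributes no extra transport cost. A secondary care point is aligning the notation: Theorem~\ref{th:bound} is stated for a single quantization of the center of the ambiguity ball, so we must ensure the recursion is always centered at $\probHat_t$ (the available, computable distribution) rather than at the intractable $\prob_t$, which is why the $(\theta_t+\epsilon)$ term in $\theta_{t+1}$ correctly combines the inherited error $\theta_t$ with the fresh quantization error $\theta_{d,t}\leq\epsilon$.
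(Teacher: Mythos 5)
Your proof is correct and follows essentially the same route as the paper: induction with the base case handled by Theorem~\ref{th:bound-zero-ball} (since $\prob_0=\probHat_0$), the inductive step by Theorem~\ref{th:bound} centered at $\probHat_t$ with radius $\theta_t$ and $\theta_{d,t}\leq\epsilon$, and part (ii) by collapsing the recursion to the affine contraction $\theta\mapsto\sL_f(\theta+\epsilon)$ whose fixed point is $\sL_f\epsilon/(1-\sL_f)$ (you solve it in closed form, the paper invokes the Banach fixed-point theorem — the same computation). Your explicit product-coupling step showing $\wasserstein_\rho(\prob_{x_t}\times\prob_\omega,\probHat_{x_t}\times\prob_\omega)\leq\wasserstein_\rho(\prob_{x_t},\probHat_{x_t})$ is a welcome addition: the paper's proof passes from the marginal bound to the joint ambiguity ball without comment, and your lemma is exactly the detail needed to justify that step.
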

The proof of Theorem~\ref{thm:how-to-propagate-stochastic-systems} is reported in Appendix~\ref{proof:thm:how-to-propagate-stochastic-systems} and follows from a combination of Theorem \ref{th:bound} and \ref{th:bound-zero-ball} with the Banach Fixed Point Theorem \cite{goebel1990topics}.
Theorem \ref{thm:how-to-propagate-stochastic-systems} has many consequences. First of all, the bound does not necessarily grow with time; it is possible that $\theta_{t+1}<\theta_t$ if the dynamics contracts sufficiently. This is a fundamental advantage with respect to existing approaches for the same problem, whose bounds tend to grow linearly with time \cite{figueiredo2024uncertainty}. 
Furthermore, Theorem \ref{thm:how-to-propagate-stochastic-systems} guarantees that if $f$ is contracting w.r.t. $(x,\omega)$, i.e., $\lipschitz_f<1$, then the approximation error will reach a fixed point. 
Note also that the bound for the fixed point of the error reported in case ii) in Theorem \ref{thm:how-to-propagate-stochastic-systems} is stated using 
the linearization coefficients from Remark \ref{rem:linearSyst}, i.e., \((\alpha_k,\beta_k)=(\lipschitz_f, 0)\) for all \(k\). 
Consequently, in practice, the approach in Figure \ref{fig:approximation-scheme-dynamical-systems} may yield a smaller bound. 
Notably, as empirically shown in Section \ref{section:experimental-results}, our approach can lead to a fixed point for $\theta_d$ even when $\sL_f>1$ if $f$ is bounded.


\begin{remark}[Separable dynamics]
    For a process with separable dynamics as $f(x, \omega) = g(x) + s(\omega)$, where $g$ and $s$ are given piecewise Lipschitz continuous functions, we observe:
    \begin{align}  
        \maybeamp\wasserstein_\rho(\prob_{x_{t+1}}, \probHat_{x_{t+1}}) \nonumber \maybenewline &= \wasserstein_\rho(g \#\prob_{x_t} * s\#\prob_\omega, g \# \Delta_{\bsR, \bsC} \# \probHat_{x_t} * s\#\Delta_{\bsR_\omega, \bsC_\omega}\#\probHat_\omega)\maybenonumber \\
        &\leq\wasserstein_\rho(g \#\prob_{x_t}, g \# \Delta_{\bsR, \bsC} \# \probHat_{x_t}) \, + \nonumber \\
        & \hspace{3.5cm}\wasserstein_\rho(s \#\prob_{\omega}, s \# \Delta_{\bsR_\omega, \bsC_\omega} \# \probHat_{\omega}), \label{eq:28}
    \end{align}
    where $*$ is the convolution operator, $\bsR, \bsC$ defined in $\sX$-space, and $\bsR_\omega, \bsC_\omega$ in $\sW$. When $\prob_\omega$ is known, the right term in \eqref{eq:28} is constant for all $t$ and only needs to be computed only once.
\end{remark}

\begin{remark}[Ambiguous noise]
    Although we consider both $\prob_{x_0}$ and $\prob_\omega$ are known in this section, the framework can be easily extended to case where one has uncertain $\prob_{x_0} \in \mathbb{B}_{\theta_0}(\Tilde{\prob})$ and $\prob_{\omega} \in \mathbb{B}_{\theta_\omega}(\Tilde{\mathbb{T}})$, where $\theta_0, \theta_\omega>0$, $\Tilde{\prob}\in \sP_\rho(\sX)$, and $\Tilde{\mathbb{T}} \in \sP_\rho(\sW)$ are given. In this case, we note that $\prob_{x_0} \times \prob_\omega \in \mathbb{B}_{\theta_0+\theta_\omega}(\Tilde{\prob} \times \Tilde{\mathbb{T}})$ and then we use Theorem \ref{th:bound} to bound the first time-step $\wasserstein_\rho(f\#(\prob_{x_0} \times \prob_\omega), f\#\Delta_{\bsR, \bsC}\#(\prob_{x_0} \times \prob_\omega))$.
\end{remark}

\section{Experimental results}\label{section:experimental-results}
In this Section, we empirically evaluate the performance of our $\rho$-Wasserstein uncertainty propagation framework on various benchmarks taken from the literature\footnote{Our code is available at \url{https://github.com/sjladams/DUQviaWasserstein}}. We consider the following piecewise Lipschitz continuous functions $f$: a \textit{Bounded Linear} $f$ adapted from \cite{santoyo2021barrier} with state space dimension $d$ ranging from $1$ to $4$, an instance of the \textit{Quadruple-Tank} from \cite{johansson2000quadruple}, the \textit{Mountain Car} dynamics \cite{singh1996reinforcement}, and the \textit{Dubins Car} \cite{balkcom2018dubins}. Additionally, we consider the \textit{Sigmoid} function introduced in Example \ref{exampl:ImportanceOdAlphaBeta}, and a 10-dimensional \textit{Neural Network layer}. In Section \ref{subsection:stochastic-system-experiments-results}, we consider stochastic dynamical systems variants of the Mountain Car \cite{singh1996reinforcement}, and Quadruple-Tank \cite{johansson2000quadruple} with additive Gaussian noise, and of a 3D-NN dynamics affected by non-Gaussian process noise. Additional details about the functions, dynamical systems, and probability distributions are available in the Appendix~\ref{appendix:functions}.

In what follows, first, in Sections \ref{subsection:applying-loc-selection-algo} and \ref{subsection:convergence} we investigate the impact of the placement and the number of quantization locations \(\bsC\) on the error bounds, respectively. 
Then, in Section \ref{subsection:experiments-impact-theta}, we analyze the effect of the linearization coefficients in Theorems \ref{th:bound} and \ref{th:bound-zero-ball} in case of non-linear functions \(f\) for different radii of uncertainty \(\theta\). 
Lastly, in Section \ref{subsection:stochastic-system-experiments-results}, we apply the approximation scheme presented in Section \ref{section:dynamical-systems} to stochastic dynamical systems.   For all the experiments, we fix $\rho=2$.  All the experiments were conducted on an Intel Core i7-1365U CPU with 16GB of RAM using a single-core implementation.

\begin{table}[hb]
\centering
\ifthenelse{\boolean{doublecolumn}}{}{\small}
\begin{tabular}{llcccc}
                     &                    & \multicolumn{4}{c}{$|\bsC|$}                          \\ \cline{3-6} 
Dim. $d$ & Algorithm & 5 & 10 & 100 & 1000 \\ \hline
\multirow{2}{*}{1}   & Optimized grid        & 0.5085     & 0.2731      & 0.0280       & 0.0087        \\
                     & Uniform grid       & 0.5420     & 0.3363      & 0.0487       & 0.0169        \\ \hline
\multirow{2}{*}{2}   & Optimized grid        & 0.7867     & 0.1935      & 0.0723       & 0.0248        \\
                     & Uniform grid       & 0.7867     & 0.3826      & 0.1566       & 0.0539        \\ \hline
\multirow{2}{*}{3}   & Optimized grid        & 0.7940     & 0.1982      & 0.0818       & 0.0410        \\
                     & Uniform grid       & 0.7940     & 0.5428      & 0.3532       & 0.1801        \\ \hline
\multirow{2}{*}{4}   & Optimized grid        & 1.8681     & 0.8043      & 0.4078       & 0.2111        \\
                     & Uniform grid       & 1.8681     & 1.8465      & 0.7935       & 0.6161        \\ \hline
\end{tabular}
\caption{Comparison of error bounds from Theorem \ref{th:bound-zero-ball} for $\bsR, \bsC$ obtained 
as described in Remark~\ref{remark:algorithm-quantization-gaussian-cas} (\emph{Optimized grid}) and the uniform partition (\emph{Uniform grid}) for the Bounded Linear benchmark defined in the Appendix~\ref{appendix:functions}.}
\label{table:compare-alg1-to-uniform-grid}
\end{table}

\begin{figure}[ht]
    \centering
    \ifthenelse{\boolean{doublecolumn}}{
        \includegraphics[width=1.0\linewidth]{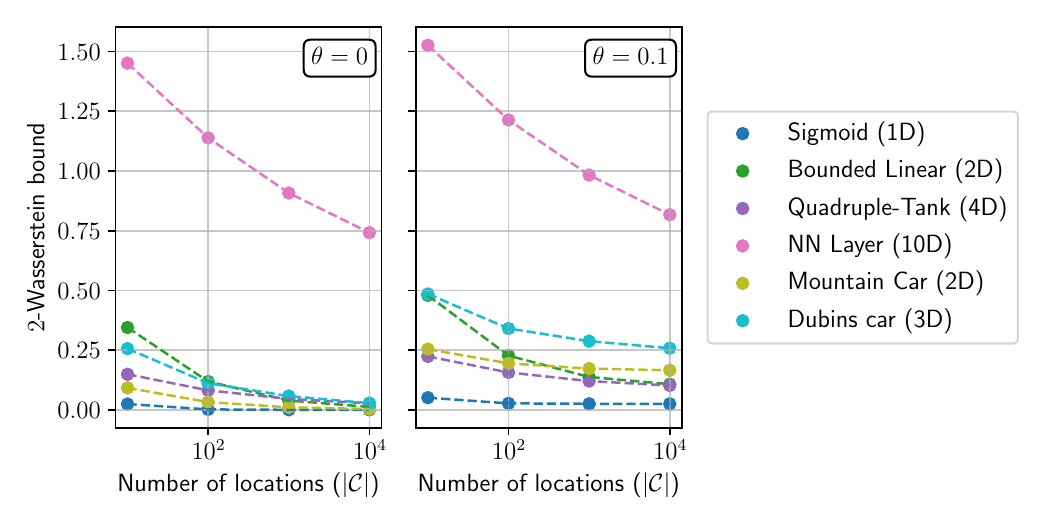}
    }{
        \includegraphics[width=0.8\linewidth]{Figures/Experiments/increase_num_locs_analysis.pdf}
    }
    \caption{Upper bounds on $\sup_{\probQ \in \mathbb{B}_\theta(\prob)} \wasserstein_2(f\#\probQ, f\#\Delta_{\bsR, \bsC}\#\prob)$ 
    for various benchmarks computed using Theorem \ref{th:bound-zero-ball}  for $\theta = 0$ and Theorem \ref{th:bound} for $\theta = 0.1$. }
    \label{fig:effect-adding-locs}
\end{figure}

\subsection{Improving on uniform grids of quantization locations}\label{subsection:applying-loc-selection-algo}
In this Section, we analyze the effect of optimizing the locations \(\bsC\) used for the quantization operator \(\signature_{\bsR, \bsC}\) using the approach in Remark~\ref{remark:algorithm-quantization-gaussian-cas} compared to taking a uniform grid. 
In particular, in Table \ref{table:compare-alg1-to-uniform-grid}, for a bounded linear $f: \realNum^d \to \realNum^d$ defined in Appendix \ref{subsection:further-details} for each $d \in \{1,2,3,4\}$, for different quantization sizes $|\bsC|$, we compare
the error bound from Theorem~\ref{th:bound-zero-ball}, 
obtained using the the procedure described in Remark~\ref{remark:algorithm-quantization-gaussian-cas}, 
with that obtained from a uniform partition of a subset $\Tilde{\sX} \subset \sX$ containing most of the probability mass of $\prob$\footnote{This uniform partition is defined as follows. We first get $\bsC$ from Remark~\ref{remark:algorithm-quantization-gaussian-cas}. We then move the locations $c_k \in \bsC$ such that they are equally spaced in all axes (also forming a grid), obtaining $\bsC_{\text{unif}}$. Finally, we compute $\bsR_{\text{unif}}$ as the Voronoi partition w.r.t. $\bsC_{\text{unif}}$.}. From Table \ref{table:compare-alg1-to-uniform-grid}, we observe that as the dimensionality of the problem increases, the restrictiveness of placing locations in an equidistant fashion also augments. In fact, note that while for $d=1$ the error bound in Theorem \ref{th:bound-zero-ball} is similar regardless of the heuristics used to place the locations, for $d=4$ the selection performed by employing Remark \ref{remark:algorithm-quantization-gaussian-cas} leads to bound 2-3 times smaller than the uniform partition approach.

\subsection{Error bound convergence}\label{subsection:convergence}
In the previous Section, we focused on the placement of the quantization operator's locations. Here, we analyze how the $2$-Wasserstein bounds decrease as the number of optimized locations for Theorems \ref{th:bound} and \ref{th:bound-zero-ball} grows. More precisely, given a distribution $\prob$ and a ambiguity set $\mathbb{B}_\theta(\prob)$ of radius $\theta=0$ or $\theta=0.1$, we report the bound of $\sup_{\probQ \in \mathbb{B}_\theta(\prob)} \wasserstein_2(f\#\probQ, f\#\Delta_{\bsR, \bsC}\#\prob)$ for different quantization sizes $|\bsC|$. 

From Figure~\ref{fig:effect-adding-locs}, we observe that for all benchmarks increasing the number of locations in the quantization leads to a decreasing bound. This is expected due to the reduction of $\theta_d$ guaranteed by the discussion in Section \ref{section:constructing-converging-quant-operator}. In the case where $\theta=0$, as there is no uncertainty around $\prob$, the bounds converge to zero. In contrast, with $\theta_d=0.1$, the bounds do not converge to zero, but to different values for each system. Both observations empirically confirm Theorem \ref{prop:convergence-algorithm-error-approx}. It is also important to note that the error bounds are impacted both by the geometry of the probability space of \(\prob\) as well as the system dynamics \(f\). For instance, for the Dubins car, the upper bound on $\sup_{\probQ \in \mathbb{B}_\theta(\prob)} \wasserstein_2(f\#\probQ, f\#\Delta_{\bsR, \bsC}\#\prob)$ is consistently larger than that of the Quadruple-Tank, even though the Quadruple-Tank is higher dimensional. This can be explained because the Dubins car is not a stable system, and, consequently, the resulting uncertainty in terms of $2$-Wasserstein distance is amplified.

\begin{figure}[htb]
    \centering
    \ifthenelse{\boolean{doublecolumn}}{
        \includegraphics[width=1.0\linewidth]{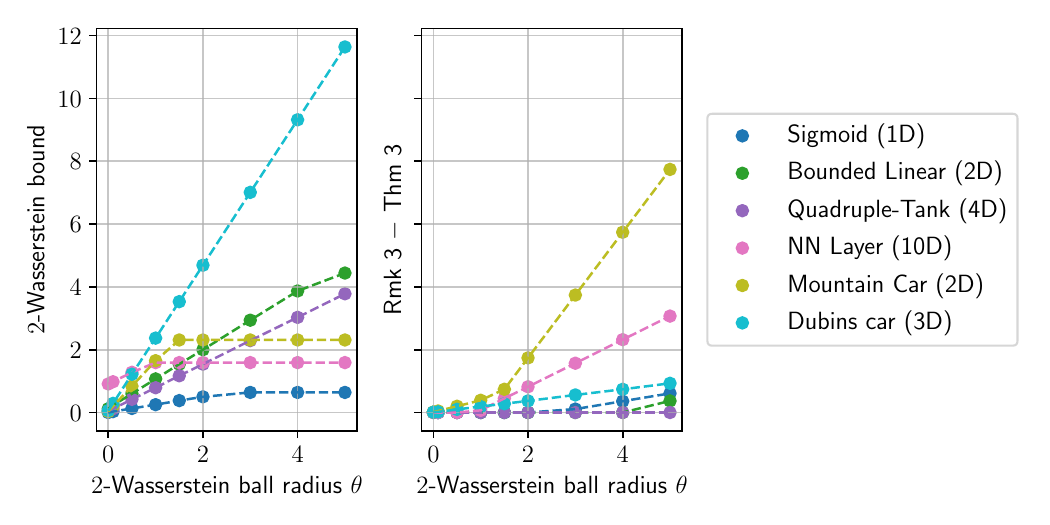}
    }{
        \includegraphics[width=0.8\linewidth]{Figures/Experiments/wass_ball_radius_analysis.pdf}
    }
    \caption{Analysis of the upper bounds on $\sup_{\probQ \in \mathbb{B}_\theta(\prob)} \wasserstein_2(f\#\probQ, f\#\Delta_{\bsR, \bsC}\#\prob)$ computed using Theorem \ref{th:bound-zero-ball} for $\theta = 0$ and Theorem \ref{th:bound} for $\theta >0$. In the left plot are the absolute bounds using the linearization coefficients from Section \ref{subsection:approx-algo}; on the right, the absolute difference between the bounds using the more conservative global Lipschitz coefficients.}
    \label{fig:propagate-ball-theta-sizes}
\end{figure}

\begin{table*}[t]
    \centering
    \ifthenelse{\boolean{doublecolumn}}{}{\small}
    \begin{tabular}{lccccccccc}
    \textbf{}          & \multicolumn{3}{c}{\textbf{NN Layer (3D)}}                                & \multicolumn{3}{c}{\textbf{Mountain Car}}                                          & \multicolumn{3}{c}{\textbf{Quadruple-Tank}}     \\ \cline{2-10} 
    $t$ & \textbf{Emp.} & \textbf{Rmk 1} & \multicolumn{1}{c|}{\textbf{Thm 4}} & \textbf{Emp.} & \textbf{Rmk 1}               & \multicolumn{1}{c|}{\textbf{Thm 4}} & \textbf{Emp.} & \textbf{Rmk 1} & \textbf{Thm 4} \\ \hline
    1                  & 0.0116             & 0.2020         & \multicolumn{1}{c|}{0.1214}         & 0.0256             & 0.0627                       & \multicolumn{1}{c|}{0.0547}         & 0.0821             & 0.1517         & 0.1517         \\
    2                  & 0.0090             & 0.2436         & \multicolumn{1}{c|}{0.1358}         & 0.0302             & 0.2364                       & \multicolumn{1}{c|}{0.1826}         & 0.0790             & 0.2748         & 0.2748         \\
    3                  & 0.0102             & 0.2732         & \multicolumn{1}{c|}{0.1464}         & 0.0498             & 0.6183                       & \multicolumn{1}{c|}{0.4178}         & 0.0757             & 0.3670         & 0.3670         \\
    4                  & 0.0102             & 0.2941         & \multicolumn{1}{c|}{0.1522}         & 0.0371             & 1.3944                       & \multicolumn{1}{c|}{0.8088}         & 0.0680             & 0.4308         & 0.4308         \\
    5                  & 0.0104             & 0.3097         & \multicolumn{1}{c|}{0.1555}         & 0.0413             & 2.9423                       & \multicolumn{1}{c|}{1.4388}         & 0.0643             & 0.4751         & 0.4751         \\
    6                  & 0.0105             & 0.3213         & \multicolumn{1}{c|}{0.1574}         & 0.0433             & 6.0399                       & \multicolumn{1}{c|}{2.4609}         & 0.0621             & 0.5031         & 0.5031         \\
    7                  & 0.0106             & 0.3301         & \multicolumn{1}{c|}{0.1586}         & 0.0407             & 12.2351                      & \multicolumn{1}{c|}{2.9560}         & 0.0618             & 0.5185         & 0.5185         \\
    8                  & 0.0102             & 0.3366         & \multicolumn{1}{c|}{0.1593}         & 0.0507             & 24.6256                      & \multicolumn{1}{c|}{2.9748}         & 0.0659             & 0.5260         & 0.5260         \\
    9                  & 0.0105             & 0.3414         & \multicolumn{1}{c|}{0.1595}         & 0.0505             & 49.4063                      & \multicolumn{1}{c|}{2.9910}         & 0.0793             & 0.5242         & 0.5242         \\
    10                 & 0.0099             & 0.3451         & \multicolumn{1}{c|}{0.1598}         & 0.0758             & 98.9695                      & \multicolumn{1}{c|}{3.0035}         & 0.0769             & 0.5174         & 0.5174         \\ \hline
    50                 & 0.0100             & 0.3562         & \multicolumn{1}{c|}{0.1601}         & 0.0676             & $1.1 \times 10^{14}$ & \multicolumn{1}{c|}{3.1819}         & 0.0767             & 0.4794         & 0.4794        
    \end{tabular}
    \caption{Formal bounds on $\wasserstein_2(\prob_{x_t}, \probHat_{x_t})$ from Theorem \ref{th:bound} using the linearization coefficients described in Section \ref{subsection:approx-algo}, as shown in column \textit{Thm 4}, or employing coefficients \((\lipschitz_f,0)\), as in column \textit{Rmk 1}. Column \textit{Emp.} presents a Monte Carlo approximation of $\wasserstein_\rho(\prob_{x_t}, \probHat_{x_t})$, calculated using $5 \times 10^5$ samples.
    }
    \label{table:results-bounds-system-10-time-steps}
\end{table*}

\subsection{Analysis of ambiguity set propagation using global and local linearization}\label{subsection:experiments-impact-theta}
We continue our analysis by investigating the impact of the linearization coefficients on our 2-Wasserstein bounds for different uncertainty radii $\theta$. 
Specifically, we compare the bounds constructed using the trivial linearization coefficients \((\lipschitz_f,0)\), with those derived from the coefficients described in Section \ref{subsection:approx-algo}, as per Theorem \ref{th:bound} and \ref{th:bound-zero-ball}\footnote{More specifically, we report $\sL_f (\theta + \theta_d) - \bigg( \alpha_{\max} (\theta+\theta_d)^\rho + \sum_{k=1}^{N} \evpi{k}\beta_k \bigg)^\frac{1}{\rho}$ for $\theta>0$, and $\sL_f (\theta + \theta_d) - \left( \sum_{k=1}^N \alpha_k \int_{\sR_k} \norm{x-c_k}^\rho d\prob(x) + \sum_{k=1}^N \evpi{k} \beta_k  \right)^\frac{1}{\rho}$ for $\theta=0$.}. 
We set $|\bsC| = 10^2$ for functions with dimension of at most three, and use $|\bsC| = 10^3$ otherwise. The $\realNum^d$-partitions $\bsR$ and locations $\bsC \subset \realNum^d$ are selected as outlined in Remark \ref{remark:algorithm-quantization-gaussian-cas}.

The left plot of Figure \ref{fig:propagate-ball-theta-sizes} shows that for the optimized coefficients in case of bounded functions (NN Layer, Mountain Car and Bounded Linear), the bounds saturate from a certain $\theta$ onwards. This saturation occurs because, for large $\theta$, we select $(\alpha_k, \beta_k)=(0,\sup_{x\in \sX}\norm{f(x)-f(c_k)}^\rho)$ for most regions, as explained in Section \ref{subsection:approx-algo}. Consequently, the error bound from Theorem \ref{th:bound} becomes independent of \(\theta\).
Furthermore, it is important to note that in many cases, the error bounds are smaller than $\theta$, which indicates a contraction of the ambiguity set. An exception is the Dubins car example, where instability in the system dynamics causes the ambiguity set to expand.

The right plot of Figure \ref{fig:propagate-ball-theta-sizes} confirms that, as discussed in Remark \ref{rem:linearSyst}, for nonlinear systems, the bounds constructed using the optimized coefficients are consistently and substantially tighter that the bound resulting from using the global Lipschitz coefficients. Note that for linear systems (Quadruple-Tank), the two coefficients are equivalent and lead to the same linearizations in \eqref{eq:norm-linearization} and \eqref{eq:norm-inequality-zero-budget}.

\subsection{Uncertainty Propagation in Stochastic Dynamical Systems}\label{subsection:stochastic-system-experiments-results}
In this Section, we apply the discrete approximation scheme presented in Section \ref{section:dynamical-systems} and illustrated in Figure \ref{fig:approximation-scheme-dynamical-systems} to three stochastic dynamical systems.
We analyze both an empirical estimation of, and our formal bounds on, $\wasserstein_2(\prob_{x_t}, \probHat_{x_t})$, where $\prob_{x_t}$ represents the true unknown state distribution at time $t$ and $\probHat_{x_t}$ is our discrete approximator.
In Table~\ref{table:results-bounds-system-10-time-steps}, we observe that the empirical $\rho$-Wasserstein distance remains low over longer time horizons, demonstrating the effectiveness of the approximation in practical scenarios. 
For the contracting NN Layer and Quadruple-Tank dynamics, the Monte Carlo estimates of the approximation error converge to fixed points, supporting Theorem \ref{thm:how-to-propagate-stochastic-systems}. 
For the non-contracting ($\lipschitz_f>1$) but bounded Mountain Car dynamics, the bounds from Theorem \ref{th:bound} obtained using coefficients $(\lipschitz_f,0)$ quickly explode. In contrast, using Theorem \ref{th:bound} results in bounds that converge to a fixed point due to the boundedness of the dynamics. 
From Figure \ref{fig:mountain-car-multistep-100}, we can visually confirm that our discrete approximators (right column) closely match an empirical estimate of true distributions (left column). 
We highlight that the discrete approximator is able to capture the fact that the state distribution becomes bimodal at $t=10$. Such characteristics are challenging to identify using techniques like moment matching \cite{deisenroth2011pilco}, for instance, which only focus on approximating, commonly with no guarantees, the first few moments of the distribution.

\begin{figure}[h]
  \centering
  \vspace{-1mm}
  \ifthenelse{\boolean{doublecolumn}}{
  \includegraphics[width=0.85\linewidth]{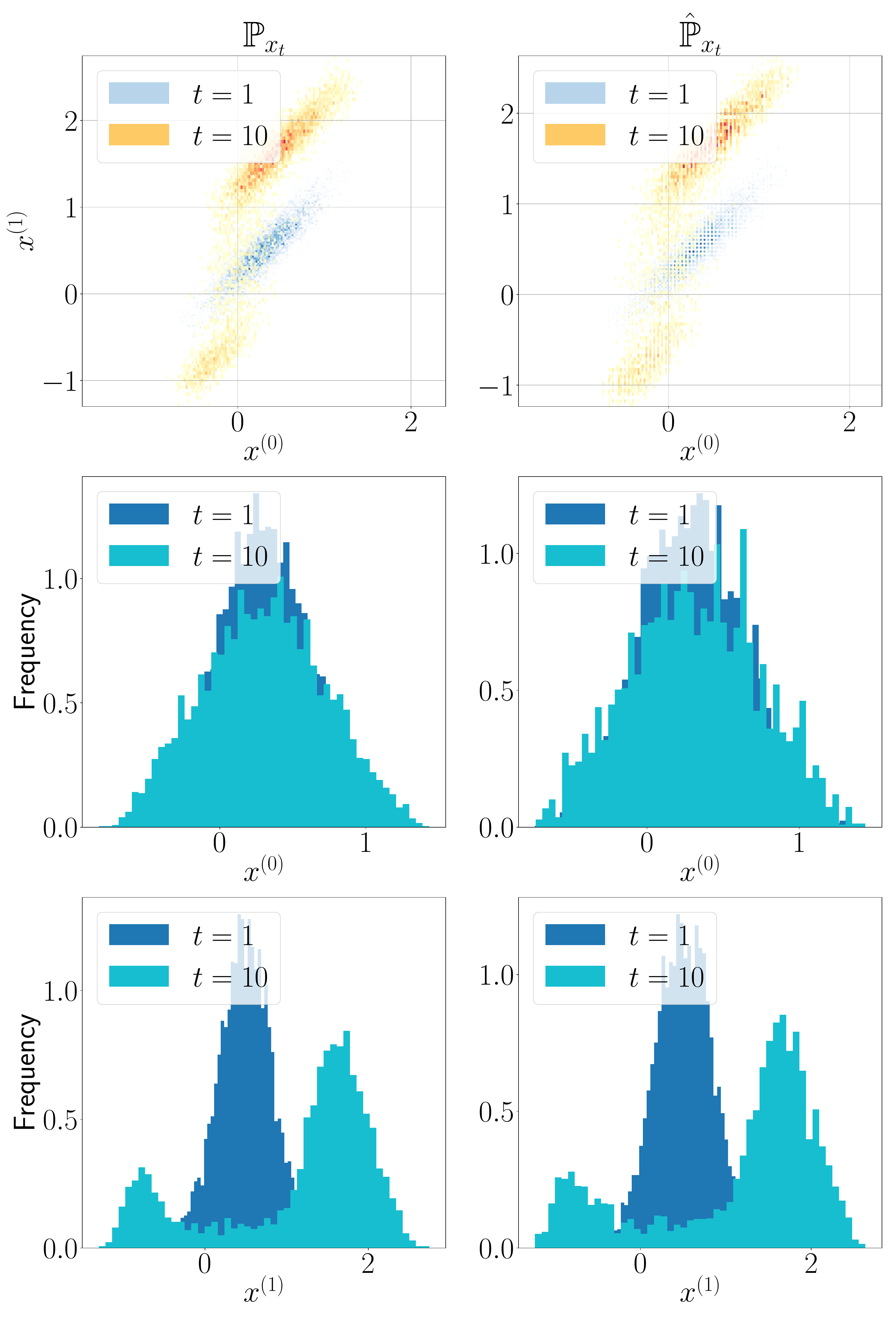}
  }{
  \includegraphics[width=0.55\linewidth]{Figures/Experiments/multistep_mountain-car.pdf}
  }
  \vspace{-3mm}
  \caption{
  Monte Carlo simulation of the true state distribution (left plots) - with $5 \times 10^3$ samples - and our discrete approximation from Section \ref{section:dynamical-systems} (right plots) - with $|\bsC|=100$ - for the Mountain Car system from $t=1$ to $t=10$. The upper plots display the joint distribution of the first two state dimensions, $x^{(1)}_t$ and $x^{(2)}_t$, for all time steps, and the lower plots illustrate the initial and final marginal distributions. 
  }
  \label{fig:mountain-car-multistep-100}
\end{figure}

\section{Conclusion and future direction}\label{section:conclusion}
We introduced a novel framework to approximate the push-forward measure of uncertain distributions with discrete distributions with formal quantification of the resulting uncertainty in terms of $\rho$-Wasserstein distance, allowing for a tractable propagation of $\rho$-Wasserstein ambiguity sets. 
We see at least three interesting future research directions. 
First, the development of efficient ways to compute the norm approximations in \eqref{eq:norm-linearization}. 
Further, in the context of multi-step propagation of ambiguity sets, such as for dynamical systems, it may be of interest to directly rely on properties of the compositions of $f \circ ... \circ f$, instead of the sequential application of our framework, as we propose in Section \ref{section:dynamical-systems}. Lastly, we indicate that this framework could be directly applied as the prediction mechanism for distributionally-robust non-linear Model Predictive Control.

\appendix
\section{Proofs}\label{section:proofs}
In this section, we present the proofs for all the results discussed in the paper's main text.

\subsection{Proof of Proposition \ref{prop:compute-theta-d}}\label{proof:prop:compute-theta-d}
Before proving Proposition \ref{prop:compute-theta-d}, we prove an auxiliary Lemma.
\begin{lemma}\label{lemma:coupling-signature}
    For $\sX \subseteq \realNum^d$, let $\prob \in \sP_{\rho}(\sX)$. Further, let $\bsR$ be a $\sX$-partition and $\bsC$ a set of locations. Then, for $\gamma^*\in\probMeas(\sX\times\sX)$ defined as
    \begin{equation}\label{eq:general-coupling}
        d\gamma^*(x_1, x_2) \coloneqq \sum_{i=1}^N\indicator_{\sR_i}(\vx_1) d\prob(x_1) d\delta_{c_i}(x_2)
    \end{equation}
    it holds that:
    
    i) $\gamma^*$ is a valid coupling between $\prob$ and $\Delta_{\bsR, \bsC}\#\prob$, i.e. $\gamma^* \in \Gamma(\prob, \Delta_{\bsR, \bsC}\#\prob)$,
    
    ii) if $\Bar{\bsR}$ is the Voronoi partition w.r.t. $\bsC$ then  
    \[
        \gamma^* = \arginf{\gamma \in \Gamma(\prob,\Delta_{\Bar{\bsR},\sC}\#\prob)} \int_{\sX \times \sX} \norm{x-y}^{\rho} d\gamma(x, y).
    \]
\end{lemma}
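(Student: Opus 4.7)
The plan is to prove part (i) by direct computation of the marginals of $\gamma^*$, and to prove part (ii) by a matching upper and lower bound argument exploiting the defining property of the Voronoi partition.

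For part (i), I will verify that $\gamma^*$ has the claimed marginals. Take any Borel set $A \in \borel(\sX)$. The first marginal gives $\gamma^*(A \times \sX) = \int_A \sum_{i=1}^N \indicator_{\sR_i}(x_1)\, d\prob(x_1) = \prob(A)$, using that $\{\sR_i\}_{i=1}^N$ partitions $\sX$ and therefore $\sum_i \indicator_{\sR_i}(x_1) = 1$ for every $x_1 \in \sX$. For the second marginal, the Dirac factor yields $\gamma^*(\sX \times A) = \sum_{i=1}^N \prob(\sR_i)\, \indicator_A(c_i) = \sum_{i=1}^N \prob(\sR_i)\, \delta_{c_i}(A)$, which matches $(\Delta_{\bsR, \bsC}\#\prob)(A)$ by definition of the quantization. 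Hence $\gamma^* \in \Gamma(\prob, \Delta_{\bsR, \bsC}\#\prob)$.

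For part (ii), my strategy is to establish a universal lower bound and show that $\gamma^*$ attains it when $\bar{\bsR}$ is the Voronoi partition with respect to $\bsC$. For any $\gamma \in \Gamma(\prob, \Delta_{\bar{\bsR}, \bsC}\#\prob)$, the second marginal is supported on $\bsC$, so $\norm{x-y}^\rho \geq \min_i \norm{x - c_i}^\rho$ $\gamma$-almost everywhere. Integrating and using the first-marginal constraint yields
\begin{equation*}
\int_{\sX \times \sX} \norm{x-y}^\rho\, d\gamma(x,y) \geq \int_{\sX} \min_i \norm{x - c_i}^\rho\, d\prob(x).
\end{equation*}
For the reverse direction, I compute the cost of $\gamma^*$ directly. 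By the Dirac structure of $\gamma^*$,
\begin{equation*}
\int_{\sX \times \sX} \norm{x - y}^\rho\, d\gamma^*(x, y) = \sum_{i=1}^N \int_{\bar{\sR}_i} \norm{x - c_i}^\rho\, d\prob(x).
\end{equation*}
The defining property of the Voronoi cell is that $\norm{x - c_i} \leq \norm{x - c_j}$ for every $j$ whenever $x \in \bar{\sR}_i$, so on $\bar{\sR}_i$ we have $\norm{x - c_i}^\rho = \min_j \norm{x - c_j}^\rho$. Summing over $i$ and using that $\{\bar{\sR}_i\}_i$ covers $\sX$ gives $\int \norm{x-y}^\rho d\gamma^*(x,y) = \int \min_j \norm{x - c_j}^\rho d\prob(x)$, matching the lower bound. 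Therefore $\gamma^*$ achieves the infimum.

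The only mild obstacle is the boundary ambiguity: points equidistant to multiple locations formally lie in several Voronoi cells, so $\bar{\bsR}$ may require a tie-breaking convention to be a genuine partition. This is harmless for the argument: under any such convention one still has $\sum_i \indicator_{\bar{\sR}_i}(x) = 1$ for all $x$, and the identity $\norm{x - c_i}^\rho = \min_j \norm{x - c_j}^\rho$ on $\bar{\sR}_i$ holds regardless of how ties are resolved, so the cost computation is unaffected. Thus no subtle measure-theoretic argument is required, and the two parts of the lemma follow from the direct computations outlined above.
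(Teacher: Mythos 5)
Your proof is correct and follows essentially the same route as the paper: part (i) by direct computation of the two marginals, and part (ii) by the pointwise optimality of sending mass at $x$ to its nearest location under the Voronoi partition. In fact, your explicit matching lower bound $\int \norm{x-y}^\rho\, d\gamma \geq \int_{\sX} \min_i \norm{x-c_i}^\rho\, d\prob(x)$ for an arbitrary coupling makes rigorous what the paper only asserts in one line, so no changes are needed.
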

\begin{proof}
  We start  by proving that $\gamma^* \in \Gamma(\prob, \Delta_{\bsR, \bsC}\#\prob)$. For any $A, B \in \sB(\sX)$, we have:
    \begin{align*}
        \gamma^*(A, B) &= \int_{A} \int_{B} d\gamma^*(\vx_1, \vx_2) \\ 
        &= \int_{A} \int_{B} \sum_{i=1}^N\indicator_{\sR_i}(\vx_1)d\prob(x_1)d\delta_{\vc_i}(\vx_2) \\ &=\sum_{i=1}^N\prob(A\cap\sR_i)\indicator_{B}(\vc_i),
    \end{align*}
    which is a value in $[0, 1]$ since $\bsR$ is a partition of $\sX$. Further, note that:
    \begin{alignat*}{2}
        \gamma(A,\sX) &=\sum_{i=1}^N\prob(A\cap\sR_i)\indicator_{\sX}(\vc_i) =\sum_{i=1}^N\prob(A\cap\sR_i)=\prob(A)
    \end{alignat*}
    and
    \begin{alignat*}{2}
        \gamma(\sX,B) \maybeamp=\sum_{i=1}^N\prob(\sX\cap\sR_i)\indicator_{B}(\vc_i) =\sum_{i=1}^N\prob(\sR_i)\indicator_{B}(\vc_i) \maybenewline\maybeamp=\big( \Delta_{\bsR, \bsC}\#\prob \big)(B)
    \end{alignat*}
    and, consequently, $\gamma(\sX, \sX) = 1$.
    Thus,  $\gamma \in \Gamma(\prob, \Delta_{\bsR, \bsC}\#\prob)$. This proves item i).
    To prove item ii), it suffices to note that if $x \in \sR_i$, then by the definition of the Voronoi partition w.r.t. $\bsC$, the cost of transporting $d\prob(x)$ to $c_i$ is smaller than any other $c_j, j \neq i$ since $\norm{x-c_i} \leq \norm{x-c_j}$.
\ifthenelse{\boolean{doublecolumn}}{\qed}{}
\end{proof}
We are now ready to prove Proposition \ref{prop:compute-theta-d}.
Let $\gamma^*$ be defined as in \eqref{eq:general-coupling}. Using item i) from Lemma \ref{lemma:coupling-signature}, we have:
\begin{align}
    & \wasserstein_\rho(\prob, \Delta_{\bsR, \bsC}\#\prob)^\rho   \leq \int_{\sX \times \sX} \norm{x-y}^{\rho} d\gamma^*(x, y) \label{eq:ineq-prove-prop12} \\ 
    & = \int_{\sX \times \sX} \norm{x-y}^{\rho} \sum_{k=1}^N\indicator_{\sR_k}(\vx) d\prob(x) d\delta_{c_k}(y) \nonumber \\ 
    & =\sum_{k=1}^N \int_{\sR_k} \norm{x-c_k}^\rho d\prob(x) \nonumber 
\end{align}
If $\Bar{\bsR}$ is the Voronoi partition w.r.t. $\bsC$, by applying item ii) from Lemma \ref{lemma:coupling-signature}, the inequality in \eqref{eq:ineq-prove-prop12} becomes an equality.
\qed

\subsection{Proof of Theorem \ref{th:bound}}\label{proof:th:bound}
Before proving Theorem \ref{th:bound}, we show that $\sup_{\probQ \in \mathbb{B}_\theta(\prob)} \wasserstein_\rho(f\#\probQ, f\#\Delta_{\bsR, \bsC}\#\prob)$ can be upper bounded by a one-dimensional minimization program, using duality techniques from the DRO literature \citep{gao2023distributionally, mohajerin2018data, yue2022linear}.
\begin{proposition}\label{proposition:lagrangian-duality}
    For $\sX \subseteq \realNum^d$, let $\prob \in \sP_\rho (\sX)$, $\bsR$ be a $\sX$-partition, and $\bsC$ be a set of locations. Further, denote $\theta_d \coloneqq \bigg( \sum_{k=1}^N \int_{\sR_k} \norm{x-c_k}^\rho d\prob(x) \bigg)^\frac{1}{\rho}$, and call $\evpi{i} \coloneqq \prob(\sR_i)$ for every $\sR_i \in \bsR$. Then,
    \begin{align}
        &\sup_{\probQ \in \mathbb{B}_\theta(\prob)} \wasserstein_\rho(f\#\probQ, f\#\Delta_{\bsR, \bsC}\#\prob) \nonumber\\
        &\qquad\leq \bigg( \inf_{\lambda \geq 0} \bigg\{ \lambda (\theta+\theta_d)^\rho \maybenonumber\maybenewline\maybeamp\maybeqquad + \sum_{j=1}^N \evpi{j} \sup_{\xi \in \sX} \big( \norm{f(\xi) - f(\vc_j)}^\rho - \lambda \norm{\xi - \vc_j}^\rho \big) \bigg\}  \bigg)^\frac{1}{\rho} \maybenonumber
    \end{align}
\end{proposition}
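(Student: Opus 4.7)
The plan is to relax the supremum of Wasserstein distances over the ambiguity ball into a distributionally robust transport problem whose second marginal is fixed at the discrete measure $\Delta_{\bsR, \bsC}\#\prob$, then apply Lagrangian duality on the transport-cost constraint, and finally exploit discreteness of $\Delta_{\bsR, \bsC}\#\prob$ to collapse the resulting program to a one-dimensional minimization over $\lambda$.

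\textbf{Step 1: Relaxation to a fixed-second-marginal problem.} For any $\probQ \in \mathbb{B}_\theta(\prob)$, the triangle inequality for $\wasserstein_\rho$ combined with Proposition~\ref{prop:compute-theta-d} yields $\wasserstein_\rho(\probQ, \Delta_{\bsR, \bsC}\#\prob) \leq \theta + \theta_d$, so there exists $\gamma \in \Gamma(\probQ, \Delta_{\bsR, \bsC}\#\prob)$ with $\int \norm{x-y}^\rho d\gamma \leq (\theta+\theta_d)^\rho$. Using the push-forward identity~\eqref{eq:wass-for-pushforward-equivalence} and enlarging the feasible set from couplings with the specific first marginal $\probQ$ to \emph{all} $\gamma$ with second marginal $\Delta_{\bsR, \bsC}\#\prob$ satisfying the transport-cost constraint, I obtain
\begin{align*}
    \sup_{\probQ \in \mathbb{B}_\theta(\prob)}\!\!\wasserstein_\rho(f\#\probQ, f\#\Delta_{\bsR, \bsC}\#\prob)^\rho \leq \sup_{\gamma \in S_{\theta+\theta_d}(\Delta_{\bsR, \bsC}\#\prob)} \int_{\sX\times\sX} \norm{f(x)-f(y)}^\rho d\gamma(x,y),
\end{align*}
where $S_{\theta+\theta_d}(\Delta_{\bsR, \bsC}\#\prob)$ is the set already used in Section~\ref{remark:ImprovedBoundTheta0}.

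\textbf{Step 2: Lagrangian duality.} I dualize the transport-cost constraint. For every $\lambda \geq 0$, weak duality gives
\begin{align*}
    \sup_{\gamma \in S_{\theta+\theta_d}(\Delta_{\bsR, \bsC}\#\prob)} \!\!\int \norm{f(x)-f(y)}^\rho d\gamma \leq \lambda(\theta+\theta_d)^\rho + \sup_{\substack{\gamma \in \sP(\sX\times\sX) \\ \text{proj}_2\#\gamma = \Delta_{\bsR,\bsC}\#\prob}} \int \bigl[\norm{f(x)-f(y)}^\rho - \lambda\norm{x-y}^\rho\bigr]d\gamma.
\end{align*}
Taking the infimum over $\lambda \geq 0$ on the right-hand side gives the desired dual upper bound on the primal.

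\textbf{Step 3: Decoupling via discreteness.} Because $\Delta_{\bsR, \bsC}\#\prob = \sum_{j=1}^N \evpi{j}\delta_{c_j}$ is discrete, every $\gamma$ with second marginal $\Delta_{\bsR, \bsC}\#\prob$ disintegrates uniquely as $d\gamma(x,y) = \sum_{j=1}^N \evpi{j}\, d\mu_j(x)\, d\delta_{c_j}(y)$ for some probability measures $\mu_1,\ldots,\mu_N$ on $\sX$. The inner supremum therefore decouples atom-by-atom:
\begin{align*}
    \sup_{\mu_1,\ldots,\mu_N \in \sP(\sX)} \sum_{j=1}^N \evpi{j} \int_\sX \bigl[\norm{f(\xi)-f(\vc_j)}^\rho - \lambda\norm{\xi-\vc_j}^\rho\bigr] d\mu_j(\xi) = \sum_{j=1}^N \evpi{j}\sup_{\xi \in \sX}\bigl[\norm{f(\xi)-f(\vc_j)}^\rho - \lambda\norm{\xi-\vc_j}^\rho\bigr],
\end{align*}
because the optimal $\mu_j$ places all its mass on a maximizer (or an approximate maximizer) of its integrand. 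Combining Steps 1--3 and taking $\rho$-th roots yields the statement.

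\textbf{Main obstacle.} The delicate points are the two relaxations: (i) enlarging the feasible couplings in Step~1 beyond those actually arising from $\probQ \in \mathbb{B}_\theta(\prob)$, and (ii) the measure-theoretic interchange of supremum and integral in Step~3. Point (i) is clearly a one-sided inequality and requires no further justification. For point (ii), I will argue that since the objective and constraint are additive in the disintegration $(\mu_j)_{j=1}^N$, the supremum over each $\mu_j$ reduces to a pointwise supremum of its integrand over $\sX$; this is either achieved by a Dirac measure at a maximizer, or arbitrarily well-approximated by Diracs at near-maximizers, so the identity holds regardless of whether the supremum is attained.
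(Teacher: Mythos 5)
Your proof is correct and follows essentially the same route as the paper's: relaxation of the feasible couplings to the set $S_{\theta+\theta_d}(\Delta_{\bsR,\bsC}\#\prob)$ with fixed discrete second marginal, Lagrangian dualization of the transport-cost constraint, and reduction of the inner supremum to a pointwise supremum atom by atom. The only cosmetic differences are that you invoke weak duality (which suffices for the stated one-sided inequality, whereas the paper appeals to strong duality) and that you justify the interchange of supremum and integral directly via the disintegration over the Dirac atoms, where the paper instead cites Theorem 1 of \cite{gao2023distributionally}.
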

\begin{proof}
    We define $S_\theta(\prob)$ as a subspace of $\sP(\sX \times \sX)$ containing all the couplings for which one of the marginals is $\prob$ and the other implied marginal is at most $\theta$ far in $\rho$-Wasserstein distance from $\prob$, i.e.
    \begin{align*}
        S_\theta(\prob) \coloneqq\maybeamp\Big\{ \gamma \in \sP(\sX \times \sX) : \maybenonumber\maybenewline\maybeamp\int_{\sX \times \sX} \norm{x_1 - x_2}^\rho d\gamma(x_1, x_2) \leq \theta^\rho, \text{proj}_2 \# \gamma = \prob \Big\},
    \end{align*}
    where $\text{proj}_2 \# \gamma$ returns the marginal distribution of $\gamma$ in the second component, i.e. $\text{proj}_2 \# \gamma \coloneqq \int_{\sX} \gamma(d x_1, .)$. We then have:
    \begin{align}   
        &\bigg( \sup_{\probQ \in \mathbb{B}_{\theta}(\prob)}\wasserstein_\rho(f\#\probQ,f \#\signature_{\bsR, \bsC}\#\prob) \bigg)^\rho \nonumber \\
        & \qquad \text{(By monotonicity of $x^{\rho}$ for $x\geq 0$)} \nonumber\\
        &= \sup_{\probQ \in \mathbb{B}_{\theta}(\prob)}\wasserstein_\rho(f\#\probQ,f \#\signature_{\bsR, \bsC}\#\prob)^\rho \nonumber \\
        & \qquad \text{ 
        (Using Eqn~\eqref{eq:wass-for-pushforward-equivalence})
        } \nonumber\\
        &= \sup_{\probQ \in \mathbb{B}_{\theta}(\prob)} \inf_{\gamma \in \Gamma(\probQ, \Delta_{\bsR, \bsC}\#\prob)} \int_{\sX \times \sX} \norm{f(x_1) - f(x_2)}^\rho d\gamma(x_1, x_2) \nonumber \\
        & \, \text{(As $\mathbb{B}_{\theta}(\prob) \subseteq \mathbb{B}_{\theta+\theta_d}(\Delta_{\bsR, \bsC} \# \prob)$ for $\wasserstein_\rho(\prob, \Delta_{\bsR, \bsC} \# \prob) \leq \theta_d$) } \nonumber\\
        &\leq \sup_{\probQ \in \mathbb{B}_{\theta+\theta_d}(\Delta_{\bsR, \bsC}\#\prob)} \inf_{\gamma \in \Gamma(\probQ, \Delta_{\bsR, \bsC}\#\prob)} \maybenonumber\maybenewline
        \maybeamp\maybeqquad\maybeqquad\maybeqquad\maybeqquad\maybeqquad\int_{\sX \times \sX} \norm{f(x_1) - f(x_2)}^\rho d\gamma(x_1, x_2) \nonumber \\
        & \qquad \text{(By the fact that $\Gamma(\probQ, \Delta_{\bsR, \bsC}\#\prob)\subseteq S_{\theta+\theta_d}(\Delta_{\bsR, \bsC} \# \prob)$)} \nonumber\\
        &\leq \sup_{\gamma \in S_{\theta+\theta_d}(\Delta_{\bsR, \bsC} \# \prob)} \int_{\sX \times \sX} \norm{f(x_1) - f(x_2)}^\rho d\gamma(x_1, x_2)\label{ineq:s-space}
    \end{align}
    Applying Lagrangian duality to \eqref{ineq:s-space}:
    \begin{align}
        &\sup_{\gamma \in S_{\theta+\theta_d}(\Delta_{\bsR, \bsC} \# \prob)} \int_{\sX \times \sX} \norm{f(x_1) - f(x_2)}^\rho d\gamma(x_1, x_2) \nonumber \\ &\qquad \text{(By Lagrangian strong duality)} \nonumber \\&= \inf_{\lambda \geq 0} \sup_{\gamma \in \sP(\sX \times \sX)} \bigg\{ \int_{\sX \times \sX} \norm{f(x_1) - f(x_2)}^\rho d\gamma(x_1, x_2) + \nonumber \\&\lambda \bigg( (\theta+\theta_d)^\rho - \int_{\sX \times \sX} \norm{x_1 - x_2}^\rho d\gamma(x_1, x_2) \bigg) : \maybenonumber\maybenewline\maybeamp \text{proj}_2 \# \gamma = \Delta_{\bsR, \bsC} \# \prob \bigg\} \label{eq:46} \\
        &\qquad \text{(Reorganizing the terms)} \nonumber \\
        &= \inf_{\lambda \geq 0} \sup_{\gamma \in \sP(\sX \times \sX)} \bigg\{ \lambda (\theta+\theta_d)^\rho + \nonumber \\
        & \int_{\sX \times \sX} \big( \norm{f(x_1) - f(x_2)}^\rho - \lambda \norm{x_1 - x_2}^\rho \big)d\gamma(x_1, x_2) \maybenonumber\maybenewline
        \maybeamp: \text{proj}_2 \# \gamma = \Delta_{\bsR, \bsC} \# \prob \bigg\} \nonumber \\
        &\qquad \text{(By applying Theorem 1 from \cite{gao2023distributionally})} \nonumber \\
        &= \inf_{\lambda \geq 0} \bigg\{ \lambda (\theta+\theta_d)^\rho + \maybenonumber\maybenewline
        \maybeamp \int_{\sX} \sup_{\xi \in \sX} \big( \norm{f(\xi) - f(\varsigma)}^\rho - \lambda \norm{\xi - \varsigma}^\rho \big)d\big(\Delta_{\bsR, \bsC} \# \prob \big)(\varsigma) \bigg\} \label{eq:48} \\
        &\qquad \text{(By using the definition of $\Delta_{\bsR, \bsC}\#\prob$)} \nonumber \\
        &= \inf_{\lambda \geq 0} \bigg\{ \lambda (\theta+\theta_d)^\rho \maybenonumber\maybenewline 
        \maybeamp + \sum_{j=1}^N \evpi{j} \sup_{\xi \in \sX} \big( \norm{f(\xi) - f(\vc_j)}^\rho - \lambda \norm{\xi - \vc_j}^\rho \big) \bigg\} \nonumber
    \end{align}
    \ifthenelse{\boolean{doublecolumn}}{\qed}{}
\end{proof}

We are now ready to prove Theorem \ref{th:bound}. By Proposition \ref{proposition:lagrangian-duality},
\begin{align}
    &\left( \sup_{\probQ \in \mathbb{B}_\theta(\prob)} \wasserstein_\rho(f\#\probQ, f\#\Delta_{\bsR, \bsC}\#\prob) \right)^\rho \nonumber\\
    &\leq \inf_{\lambda \geq 0} \bigg\{ \lambda (\theta+\theta_d)^\rho \maybenonumber\maybenewline
    \maybeamp\maybeqquad + \sum_{j=1}^N \evpi{j} \sup_{\xi \in \sX} \big( \norm{f(\xi) - f(\vc_j)}^\rho - \lambda \norm{\xi - \vc_j}^\rho \big) \bigg\} \nonumber \\
    &\qquad \text{(By the norm linearization in \eqref{eq:norm-linearization})} \nonumber \\
    &\leq \inf_{\lambda \geq 0} \bigg\{ \lambda (\theta+\theta_d)^\rho \maybenonumber\maybenewline
    \maybeamp\maybeqquad + \sum_{j=1}^N \evpi{j} \sup_{\xi \in \sX} \big( \alpha_j \norm{\xi - \vc_j}^\rho + \beta_j - \lambda \norm{\xi - \vc_j}^\rho \big) \bigg\} \nonumber \\
    &= \inf_{\lambda \geq 0} \bigg\{ \lambda (\theta+\theta_d)^\rho \maybenonumber\maybenewline
    \maybeamp\maybeqquad + \sum_{j=1}^{N} \evpi{j}\beta_j + \sum_{j=1}^N \evpi{j} \sup_{\xi \in \sX} \big( (\alpha_j-\lambda) \norm{\xi - \vc_j}^\rho \big) \bigg\} \label{eq:inf-for-lambda-proof}
\end{align}
First, consider the case where $\sX$ is unbounded (e.g. $\sX = \realNum^d$). If there exists $\alpha_\ell$ such that $\alpha_\ell > \lambda$, then the correspondent inner supremum returns $\infty$. Thus, in the outer minimization program, it is enough to search for $\lambda \geq \max_{j \in \{1,...,N\}} \alpha_j$. Moreover, we note that for any $\lambda \geq \max_{j \in \{1,...,N\}} \alpha_j$, the inner supremum returns $0$. Hence, the solution of the whole optimization program is given by $\lambda^{*} = \max_{j \in \{1,...,N\}} \alpha_j$, so that \eqref{eq:inf-for-lambda-proof} becomes
  $  \lambda^{*} (\theta+\theta_q)^\rho + \sum_{j=1}^{N} \evpi{j}\beta_j. $
For bounded $\sX$, one may find a solution $\lambda^* < \max_{j \in \{1,...,N\}} \alpha_j$ for the entire optimization program. However, we remark that choosing $\Tilde{\lambda} = \max_{j \in \{1,...,N\}} \alpha_j$ still provides a valid upper bound on \eqref{eq:inf-for-lambda-proof}. 
\qed

\subsection{Proof of Theorem \ref{th:bound-zero-ball}}\label{proof:th:bound-zero-ball}
Let $\gamma^*$ be defined as in \eqref{eq:general-coupling}. Then, from statement i) from Lemma \ref{lemma:coupling-signature}, we have:
\begin{align}
    \maybeamp\wasserstein_\rho(f\#\prob, f\#\Delta_{\bsR, \bsC}\#\prob)^\rho \maybenonumber\maybenewline  
    &\quad \leq \int_{\sX \times \sX} \norm{f(x)-f(y)}^{\rho} d\gamma^*(x, y) \label{eq:36} \\ 
    &\quad = \int_{\sX \times \sX} \norm{f(x)-f(y)}^{\rho} \sum_{k=1}^N\indicator_{\sR_k}(\vx) d\prob(x) d\delta_{c_k}(y) \nonumber \\ 
    &\quad =\sum_{k=1}^N \int_{\sR_k} \norm{f(x)-f(c_k)}^\rho d\prob(x) \nonumber \\
    & \qquad \text{(By the norm linearization in \eqref{eq:norm-inequality-zero-budget}))}\nonumber\\
    &\quad \leq \sum_{k=1}^N \int_{\sR_k} \big( \alpha_k \norm{x-c_k}^\rho + \beta_k \big) d\prob(x) \nonumber \\
    &\quad =\sum_{k=1}^N \alpha_k \int_{\sR_k} \norm{x-c_k}^\rho d\prob(x) + \sum_{k=1}^N \beta_k \prob(\sR_k) \nonumber  
\end{align}
In the case where $\bsR$ is the Voronoi partition w.r.t. $\bsC$, by item ii) of Lemma \ref{lemma:coupling-signature}, the inequality in \eqref{eq:36} can be replaced by equality. The rest of the proof remains the same.
\qed

\subsection{Proof of Lemma \ref{prop:bound-on-convergence-problem-1}}\label{proof:prop:bound-on-convergence-problem-1}
By straightforward applications of the triangle inequality:
\begin{align*}
    \maybeamp\sup_{\probQ \in \mathbb{B}_\theta(\prob)} \wasserstein_{\rho}(f \# \probQ, f \# \Delta_{\bsR, \bsC} \# \prob) \maybenewline 
    \maybeamp\leq\sup_{\probQ \in \mathbb{B}_\theta(\prob)} \wasserstein_{\rho}(f \# \probQ, f  \# \prob) + \wasserstein_{\rho}(f  \# \prob, f \# \Delta_{\bsR, \bsC} \# \prob) ,
\end{align*}
\begin{align*}
    \maybeamp\sup_{\probQ \in \mathbb{B}_\theta(\prob)} \wasserstein_{\rho}(f \# \probQ, f \# \prob) \maybenewline\maybeamp\leq  \sup_{\probQ \in \mathbb{B}_\theta(\prob)} \wasserstein_{\rho}(f \# \probQ, f  \# \Delta_{\bsR, \bsC} \# \prob) + \wasserstein_{\rho}(f \# \Delta_{\bsR, \bsC}  \# \prob, f \#\prob). 
\end{align*}
We conclude by combining both inequalities.
\qed

\subsection{Proof of Theorem \ref{prop:convergence-algorithm-error-approx}}\label{proof:prop:convergence-algorithm-error-approx}
From Lemma \ref{prop:bound-on-convergence-problem-1}, to prove this theorem, it is enough to show that
$    \wasserstein_{\rho}(f  \# \prob, f \# \Delta_{\bsR^*, \bsC^*} \# \prob) \leq \epsilon. $
From Theorem \ref{th:bound-zero-ball}, by taking $(\alpha_k, \beta_k) = (\sL_f, 0)$ as discussed in Remark \ref{rem:linearSyst}, we have that:
\begin{align}
     \maybeamp\wasserstein_{\rho}(f  \# \prob, f \# \Delta_{\bsR^*, \bsC^*} \# \prob)^\rho 
    &\leq \sL_f^\rho \sum_{k=1}^{N+1} \int_{\sR_k^*} \norm{x-c_k^*}^\rho d\prob(x) \nonumber \\
    &= \sL_f^\rho \sum_{k=1}^N \int_{\sR_k} \norm{x-c_k}^\rho d\prob(x) + \sL_f^\rho \int_{\sX \setminus \bar\sX} \norm{x-\bar{c}}^\rho d\prob(x) \nonumber \\
    &\leq \sL_f^\rho \sum_{k=1}^N \int_{\sR_k} \norm{x-c_k}^\rho d\prob(x) + \frac{\epsilon^\rho}{2} \label{eq:63}
\end{align}
where we use the fact that, by construction, $\bsR^* \coloneqq \bsR \cup \{ \sX \setminus \bar\sX \}$ and $\bsC^* \coloneqq \bsC \cup \{ \bar{c} \}$, and also $\int_{\sX \setminus \bar\sX} \norm{x-\bar{c}}^\rho d\prob(x) \leq \frac{\epsilon^\rho}{2\sL_f^\rho}$ (which, we must highlight, is always possible as $\prob \in \sP_\rho(\sX)$). Then, what is left to show is that the left term in \eqref{eq:63} can also be upper-bounded by $\frac{\epsilon^\rho}{2}$. Indeed, because $\norm{R_k}_\infty = \frac{\norm{\bar\sX}_\infty}{N^\frac{1}{d}}$ (as all compact regions are hypercubic), it holds that:
$
    \norm{R_k}_\infty = \frac{\norm{\bar\sX}_\infty}{N^\frac{1}{d}} \leq \frac{\epsilon}{2^\frac{1}{\rho} d^\frac{1}{\rho} \sL_f},
$
where we use the fact that again by construction, $N \geq \left( \frac{2^\frac{1}{\rho} \sL_f d^\frac{1}{\rho} \norm{\bar\sX}_\infty}{\epsilon} \right)^d$. 
Thus, 
\begin{align*}
    &\sL_f^\rho \sum_{k=1}^N \int_{\sR_k} \norm{x-c_k}^\rho d\prob(x) \nonumber \\
    &\qquad \text{(From the $L_\rho$-norm definition)} \\
    &\qquad = \sL_f^\rho \sum_{k=1}^N \int_{\sR_k} \sum_{i=1}^d |x^{(i)}-c_k^{(i)}|^\rho d\prob(x) \\
    &\qquad \text{(From the $\norm{.}_\infty$ definition)} \\
    &\qquad \leq \sL_f^\rho \sum_{k=1}^N \int_{\sR_k} \sum_{i=1}^d \norm{\sR_k}_\infty^\rho d\prob(x) \\
    &\qquad \text{(Using that $\norm{R_k}_\infty \leq \frac{\epsilon}{2^\frac{1}{\rho} d^\frac{1}{\rho} \sL_f}$)} \\
    &\qquad \leq \sL_f^\rho \sum_{k=1}^N \int_{\sR_k} \sum_{i=1}^d \frac{\epsilon^\rho}{2d \sL_f^\rho} d\prob(x) \maybenewline
    \maybeamp\maybeqquad = \sL_f^\rho \sum_{k=1}^N \frac{\epsilon^\rho}{2 \sL_f^\rho} \prob(\sR_k) 
     = \frac{\epsilon^\rho}{2}  \sum_{k=1}^N \prob(\sR_k)
    \leq \frac{\epsilon^\rho}{2}.
\end{align*}
\qed

\subsection{Proof of Theorem \ref{thm:how-to-propagate-stochastic-systems}}\label{proof:thm:how-to-propagate-stochastic-systems}
We use the same notation as in Figure \ref{fig:approximation-scheme-dynamical-systems}. The proof follows by induction. The base case is $t=1$, for which we have
\begin{align*}
    \wasserstein_\rho(\prob_{x_1}, \probHat_{x_1}) \maybeamp= \wasserstein_\rho(f\#\prob_0, f\#\Delta_{\bsR_0, \bsC_0}\#\probHat_0) \maybenewline\maybeamp= \wasserstein_\rho(f\#\probHat_0, f\#\Delta_{\bsR_0, \bsC_0}\#\probHat_0)
\end{align*}
since $\prob_0 = \probHat_0$ (as $\prob_{x_0} = \probHat_{x_0}$). Thus, the bound $\theta_1$ comes from the application of Theorem \ref{th:bound-zero-ball}.
For the induction case (i.e., $t>1$) we have:
\begin{align*}
    \wasserstein_\rho(\prob_{x_{t+1}}, \probHat_{x_{t+1}}) \maybeamp= \wasserstein_\rho(f\#\prob_t, f\#\Delta_{\bsR_t, \bsC_t}\#\probHat_t) \maybenonumber\maybenewline\maybeamp\leq \sup_{\probQ \in \mathbb{B}_{\theta_t}(\probHat_{x_t})}\wasserstein_\rho(f\#\probQ, f\#\Delta_{\bsR_t, \bsC_t}\#\probHat_t),
\end{align*}
from which the bound $\theta_{t+1}$ follows from applying Theorem \ref{th:bound} for $\theta = \theta_t$, and using that $\theta_{d,t} \leq \epsilon$. This proves statement i) in the Theorem. For statement ii), we first note that by Remark \ref{rem:linearSyst}, for $t>1$:
\begin{align*}
    \wasserstein_\rho(\prob_{x_{t+1}}, \probHat_{x_{t+1}}) \maybeamp\leq \theta_{t+1} \maybenewline\maybeamp\leq \left(  \alpha_{\text{max},t}( \theta_t + \epsilon )^\rho + \sum_{k=1}^{N_t} \prob(\sR_{t,k}) \beta_{t,k} \right)^\frac{1}{\rho} \maybenewline\maybeamp\leq \sL_f(\theta_t + \epsilon)
\end{align*}
Let $T:\realNum \to \realNum$ be a map given by $T(\theta) \coloneqq \sL_f(\theta + \epsilon)$. Note that $T$ is contractive since $|T(\theta)-T(\Tilde{\theta})| \leq |\sL_f(\theta - \Tilde{\theta})| \leq \sL_f|\theta - \Tilde{\theta}|$. One can easily find a fixed point $\theta^*$ for $T$, i.e.
\begin{equation*}
    \theta^* = T(\theta^*) \iff \theta^* = \sL_f(\theta^* + \epsilon) \iff \theta^* = \frac{\sL_f}{1 - \sL_f} \epsilon
\end{equation*}
Then, by the Banach fixed-point theorem, for the sequence $\theta_{t+1} = T(\theta_t)$, it holds that $\lim_{t\to\infty}\theta_t = \theta^*$, which concludes the proof.
\qed

\section{Implementation Details}
\begin{table*}[t]
 \resizebox{\textwidth}{!}{%
\begin{tabular}{lcccccccc}
    \textbf{} & \textbf{} & \textbf{} & \multicolumn{6}{c}{\textbf{Section}} \\ \cline{4-9} 
    \textbf{} & \textbf{} & \textbf{} & \multicolumn{1}{c|}{\textbf{7.1}} & \multicolumn{5}{c}{\textbf{7.4}} \\ \hline
    \textbf{System} & $d$ & $f(x)$ & \multicolumn{1}{c|}{$\prob$}   & $f(x, \omega)$ & $\prob_{x_0}$ & $\prob_\omega$ & $|\bsC|$ & $T$ \\ \hline \\
    Sigmoid & $1$ & $f_\text{Sigm}$ & \multicolumn{1}{c|}{$\sN(0.2, 0.5)$} & & & & & \\ \\
    Bounded Linear & $2$ & $f_\text{BoundLin}$ & \multicolumn{1}{c|}{$\sN\bigg(\begin{bmatrix}1.5 \\2.5\end{bmatrix}, \begin{bmatrix}0.4 & 0.0 \\0.0 & 0.5\end{bmatrix} \bigg)$} & & & & & \\ \\
    Quadruple-Tank & $4$ & $f_\text{QuadTank}$ & \multicolumn{1}{c|}{$\sN\left(\begin{bmatrix}1.5 \\2.5\\-0.5\\-1.0\end{bmatrix}, \begin{bmatrix}0.001 & 0 & 0 & 0 \\0 & 0.02 & 0 & 0 \\ 0 & 0 & 0.4 & 0 \\ 0 & 0 & 0 & 0.01 \end{bmatrix} \right)$} & & & & & \\ \\
    NN Layer & $10$ & $f_\text{NNLay}$ & \multicolumn{1}{c|}{$\bar{\prob}$} & $\sigma(Ax+B\omega)$ & $\bar{\prob}_{x_0}$ & $\bar{\prob}_{\omega}$ & $10^2$ & $50$ \\ \\
    Mountain Car & $2$ & $f_{\text{MountCar}}$ & \multicolumn{1}{c|}{$\sN\left(\begin{bmatrix}0.3 \\0.2\end{bmatrix}, \begin{bmatrix}10^{-1} & 0 \\0 & 10^{-3}\end{bmatrix}\right)$} & $f(x) + \omega$ & $\prob$ (from 7.1) & $\sN\left(\begin{bmatrix}0 \\0\end{bmatrix}, 10^{-2}I\right)$ & $10^2$ & $50$ \\ \\
    Dubins Car & $3$ & $f_\text{DubinsCar}$ & \multicolumn{1}{c|}{$\sN\left(\begin{bmatrix}0.3 \\0.2\\0.01\end{bmatrix}, \begin{bmatrix}10^{-1} & 0 & 0 \\0 & 10^{-2} & 0\\ 0 & 0 & 10^{-3} \end{bmatrix}\right)$} & $f(x) + \omega$ & $\prob$ (from 7.1) & $\sN\left(\begin{bmatrix}0 \\0\\0\end{bmatrix}, 10^{-2}I\right)$ & $10^2$ & $50$ \\ \hline
\end{tabular}
 }
\caption{Summary of implementation details}
\label{table:details-summary}
\end{table*}

In the following, we present the implementation details of the experiments in Section \ref{section:experimental-results}. First, we introduce the piecewise Lipschitz continuous functions $f$ that we consider. Then, in Table \ref{table:details-summary}, we show the probability distributions used in the experiments.

\subsubsection{Functions}\label{appendix:functions}
\paragraph{Sigmoid (Example \ref{exampl:ImportanceOdAlphaBeta})}$f_{\text{Sigm}} = \frac{1}{1+e^{-x}}.$

\paragraph{Bounded Linear (adapted from \cite{santoyo2021barrier})} $f_{\text{BoundLin}}:\realNum^2 \rightarrow \realNum^2$ such that
\begin{equation}
    f_{\text{BoundLin}}(x) = \text{clamp}\bigg( \begin{bmatrix}0.0 & 0.4 \\0.3 & 0.8\end{bmatrix}x, \begin{bmatrix}-2 \\-2\end{bmatrix}, \begin{bmatrix}2 \\2\end{bmatrix} \bigg).
\end{equation}

\paragraph{Quadruple-Tank (instance of \cite{johansson2000quadruple})} $f_{\text{QuadTank}}:\realNum^4 \rightarrow \realNum^4$ such that:
\begin{equation}
    f_{\text{QuadTank}}(x) = \begin{bmatrix}0.721 & 0 & 0.041 & 0 \\ 0 & 0.718 & 0 & 0.033 \\ 0 & 0 & 0.724 & 0 \\ 0 & 0 & 0 & 0.737 \end{bmatrix}x
\end{equation}

\paragraph{NN Layer}$f_{\text{NNLay}}:\realNum^{10} \rightarrow \realNum^{10}$ such that $
    f_{\text{NNLay}}(x) = \sigma(Ax)$,
where $A = \text{diag}(3 \times 10^0, 10^{-3}, 5 \times 10^{-3}, 7 \times 10^{-3}, 3 \times 10^{-2}, 10^{-3}, 10^{-3}, 10^{-3}, 10^{-3}, 10^{-3})$.

\paragraph{Mountain Car (adapted from \cite{singh1996reinforcement})} $f_{\text{MountCar}}:\realNum^{2} \rightarrow \realNum^{2}$ such that:
\begin{align}
    f_{\text{MountCar}}(x) &= \text{clamp}\left( \begin{bmatrix}1 & 0 \\ 1 & 1\end{bmatrix}x + \begin{bmatrix}10^{-3} \\ 0\end{bmatrix}, \begin{bmatrix}-0.5 \\ -0.5\end{bmatrix}, \begin{bmatrix}1.2 \\ 1.2\end{bmatrix} \right) \maybenonumber\maybenewline\maybeamp \maybeqquad - 2.5 \times 10^{-3}\begin{bmatrix}\cos{\big(3x^{(2)}\big)} \\ 0\end{bmatrix}
\end{align}

\paragraph{Dubins Car \cite{balkcom2018dubins}} $f_{\text{DubinsCar}}:\realNum^{3} \rightarrow \realNum^{3}$ such that
\begin{equation}
        f_{\text{DubinsCar}}(x) = \begin{bmatrix}x^{(1)} + 1.5\sin{\big(x^{(3)}\big)} \\ 
    x^{(2)} + 1.5\cos{\big( x^{(3)} \big)} \\ x^{(3)}+ 0.6 \end{bmatrix}.
\end{equation}

\subsection{Further details}\label{subsection:further-details}
In Section \ref{subsection:convergence}, for the NN Layer function, we consider (see Table \ref{table:details-summary}) $\bar{\prob}_{x_0} = \sN(\mu_{NN}, \Sigma_{NN})$, with:
$$\mu_{NN} = [0.0, 1.0, 0.5, -0.7, 0.3, 2.0, -3.0, 0.4, -0.1, 4.0]^T,$$ and $$
\Sigma_{NN} = \text{diag}([0.0001, 0.5, 0.7, 0.2, 1.5, 2.5, 0.1, 0.5, 0.8, 0.2]).$$
Alternatively, in Section \ref{subsection:stochastic-system-experiments-results}, we consider a 3D NN Layer, where (using the same notation as in Table \ref{table:details-summary})
$A = \text{diag}([3.0, 1.5, 1.2]), B = \text{diag}([0.5, 1.0, 0.9])$,
\begin{align*}
    \maybeamp\bar{\prob}_{x_0} = \sN\left(\begin{bmatrix}1.5 \\-1.2\\2.4\end{bmatrix}, \begin{bmatrix}10^{-1} & 0 & 0 \\0 & 0.5 & 0\\ 0 & 0 & 0.2 \end{bmatrix}\right), \qquad \maybenewline
    \maybeamp\bar{\prob}_\omega = \sN\left(\begin{bmatrix}0 \\0\\0\end{bmatrix}, \begin{bmatrix}10^{-1} & 0 & 0 \\0 & 10^{-1} & 0\\ 0 & 0 & 10^{-2} \end{bmatrix}\right).
\end{align*}
Finally, in Section \ref{subsection:applying-loc-selection-algo}, we consider the following functions and distributions, for $d \in \{ 1, 2, 3, 4\}$:
For $d=1$, $f(x) = \text{clamp}(3x, -1, 1),\prob = \sN\left(0, 1\right).$
For $d=2$, $f(x) = \text{clamp}(\text{diag}([3, 0.001])x, -2, 2),$ 
$$
\prob = \sN\left(\begin{bmatrix}3 \\1\end{bmatrix}, \begin{bmatrix}0.02 & 0 \\0 & 0.5 \end{bmatrix}\right).$$
For $d=3$, $f(x) = \text{clamp}(\text{diag}([3, 0.001, 1.1])x, -2, 2),$ $$\prob = \sN\left(\begin{bmatrix}3 \\1\\-0.9\end{bmatrix}, \begin{bmatrix}0.02 & 0 &0 \\0 & 0.5&0 \\0&0&0.001 \end{bmatrix}\right).$$
For $d=4$, $f(x) = \text{clamp}(\text{diag}([3, 0.001, 1.1, 2.2])x, -2, 2),$ $$\prob = \sN\left(\begin{bmatrix}3 \\1\\-0.9\\0.4\end{bmatrix}, \begin{bmatrix}0.02 & 0 &0&0 \\0 & 0.5&0&0 \\0&0&0.001&0 \\ 0&0&0&0.2 \end{bmatrix}\right).$$

\clearpage
\bibliographystyle{plain}
\begin{small}
\bibliography{main}
\end{small}

\end{document}